\documentclass[11pt]{article}

\usepackage[margin=1in]{geometry}
\usepackage{amsmath,amssymb,amsthm}
\usepackage{graphicx}
\usepackage{xcolor}
\usepackage{placeins}
\usepackage{authblk}
\usepackage{tikz}
\usetikzlibrary{arrows.meta}
\usepackage[singlelinecheck=false, font=small]{caption}

\tikzset{every picture/.style={line width=0.75pt}}

\newtheorem{theorem}{Theorem}[section]
\newtheorem{lemma}[theorem]{Lemma}
\newtheorem{observation}[theorem]{Observation}
\newtheorem{corollary}[theorem]{Corollary}

\theoremstyle{definition}
\newtheorem{definition}[theorem]{Definition}

\newcommand{\N}{\mathbb{N}}
\newcommand{\Q}{\mathbb{Q}}

\newcommand{\Z}{\mathbb{Z}}

\DeclareMathOperator{\adim}{adim}
\DeclareMathOperator{\aDim}{aDim}
\DeclareMathOperator{\odim}{odim}
\DeclareMathOperator{\oDim}{oDim}
\DeclareMathOperator{\strong}{{str}}
\DeclareMathOperator{\mh}{MH}

\newcommand{\odimFS}[1]{\odim_{\textup{FS}}^{#1}}
\newcommand{\oDimFS}[1]{\oDim_{\textup{FS}}^{#1}}
\newcommand{\adimFS}[1]{\adim_{\textup{FS}}^{#1}}
\newcommand{\aDimFS}[1]{\aDim_{\textup{FS}}^{#1}}

\definecolor{gold}{RGB}{245,166,35}

\title{Adaptive Multi-Head Finite-State Gamblers}

\author[1]{Julianne Cruz}
\author[1]{Sho Glashausser}
\author[2]{Xiaoyuan Li}
\author[1]{Neil Lutz}

\affil[1]{Swarthmore College}
\affil[2]{Iowa State University}

\date{}

\begin{document}

\maketitle

\begin{abstract}
    Multi-head finite-state dimensions and predimensions quantify the predictability of a sequence by a gambler with trailing heads acting as ``probes to the past.''  These additional heads allow the gambler to exploit patterns that are simple but non-local, such as in a sequence $S$ with $S[n]=S[2n]$ for all $n$. In the original definitions of Huang, Li, Lutz, and Lutz (2025), the head movements were required to be \emph{oblivious} (i.e., data-independent). Here, we introduce a model in which head movements are \emph{adaptive} (i.e., data-dependent) and compare it to the oblivious model. We establish that for each $h\geq 2$, adaptivity enhances the predictive power of $h$-head finite-state gamblers, in the sense that there are sequences whose oblivious $h$-head finite-state predimensions strictly exceed their adaptive $h$-head finite-state predimensions. We further prove that adaptive finite-state predimensions admit a strict hierarchy as the number of heads increases, and in fact that for all $h\geq 1$ there is a sequence whose adaptive $(h+1)$-head finite-state predimension is strictly less than its adaptive $h$-head predimension.
\end{abstract}

\section{Introduction}

    This paper initiates the study of adaptive multi-head finite-state dimensions and predimensions, showing that the predictive power of multi-head finite-state gamblers depends both on the adaptivity of their head movements and on the number of heads.
    
    Finite-state dimensions were introduced by Dai, Lathrop, Lutz, and Mayordomo~\cite{DLLM04} and Athreya, Hitchcock, Lutz, and Mayordomo~\cite{AHLM07}. As with other effective dimensions~\cite{DISS,DCC}, they measure the unpredictability of an infinite sequence $S$ by considering the rates at which a gambler placing bets on successive symbols of $S$ can amass capital. For finite-state dimensions, the gambler's betting function must be implementable by a finite-state machine. There is a large body of literature on these dimensions and their applications, especially to topics related to Borel's theory of normal numbers; see~\cite{KozShe2021} for some of these applications.
    
    Due to the strict computational constraint on the betting function, finite-state gamblers are only able to exploit highly localized patterns in the sequence. For example, given a sequence $S$ where $S[n]=S[2n]$ for all $n\in\N$, an unconstrained gambler can quickly amass capital by remembering the whole sequence and placing ``all-in'' bets on the even-numbered symbols, but a finite-state gambler has no way to take advantage of this simple pattern more than a constant number of times.
    
    Inspired by the work of Becher, Carton, and Heiber~\cite{BCH2018} on \emph{finite-state independence}, Huang, Li, Lutz, and Lutz~\cite{mhfsd} recently introduced \emph{multi-head finite-state gamblers} that are still strictly constrained but are well-suited to this type of non-local pattern. These gamblers are still required to implement their betting functions with a finite-state machine, but they operate under a different data access model: rather than accessing transitory symbols only once, this model permits the gambler to have additional read heads acting as ``probes to the past,'' accessing symbols from earlier in the sequence. Using this model,~\cite{mhfsd} defined, for each positive integer $h$, the \emph{$h$-head finite-state predimension} of a sequence, where the original definition of finite-state dimension~\cite{DLLM04} is the $h=1$ case.
    
    The term \emph{predimension} was used in~\cite{mhfsd} because these quantities lack a standard stability property expected of dimension notions; that work then defined the stable quantity \emph{multi-head finite state dimension}, as well as versions of these notions that are \emph{strong} in the sense of~\cite{AHLM07}. Lutz~\cite{mhfsc} subsequently showed that these predimensions and dimensions admit equivalent characterizations in terms of lossless compression by multi-head finite-state transducers.
    
    The main result of~\cite{mhfsd} was a hierarchy theorem, showing that for all $h\geq 1$, there is a sequence whose $(h+1)$-head finite-state predimension is strictly less than its $h$-head finite-state predimension. This is reminiscent of the classic hierarchy theorem by Yao and Rivest~\cite{YR78} showing that ``$k+1$ heads are better than $k$'' in the context of language recognition by multi-head automata, and the study of multi-head finite-state gamblers is broadly analogous to the long line of work on multi-head automata.
    
    There is an important distinction, however, in the rules governing head movement. In the definitions of~\cite{mhfsd}, the read heads are required to move in an \emph{oblivious} (or \emph{data-independent}) way, meaning they move right according to a pre-determined schedule that does not depend on the input string. By contrast, most work on multi-head automata---including the hierarchy theorem of Yao and Rivest~\cite{YR78}---concerns \emph{adaptive} (or \emph{data-dependent}) head movements~\cite{HKM09}, although oblivious variants have also been studied (e.g.,~\cite{DKP20}). Hence, in this work, we define and study adaptive multi-head finite-state gamblers, predimensions, and dimensions.
    
    Our first main result is a separation theorem, showing that for all $h\geq 2$, there is a sequence whose adaptive $h$-head finite-state predimension is strictly less than its oblivious $h$-head finite-state predimension. We prove this by constructing a family of sequences in which the ``correct'' relative position for one of the read heads changes over time in a way described by the sequence itself. An adaptive gambler can adjust that head's speed accordingly to keep it in the correct position, but an oblivious gambler can make no such adjustment. We prove an upper bound on the adaptive $h$-head finite-state predimension of this sequence by constructing a suitable gambler, and we use a Kolmogorov complexity argument to derive a lower bound on the sequence's oblivious $h$-head finite-state predimension.
    
    Our second main result is a hierarchy theorem, directly analogous to the hierarchy theorem of Yao and Rivest~\cite{YR78}: we show that for all $h\geq 1$, there is a sequence whose adaptive $(h+1)$-head finite-state predimension is strictly less than its adaptive $h$-head finite-state predimension. We prove this adaptive hierarchy using the same family of sequences used in~\cite{mhfsd} to prove an oblivious hierarchy, and our proof tracks their Kolmogorov complexity arguments closely. It is surprising, in light of our separation theorem and the additional flexibility granted by adaptivity, that the same techniques can be so straightforwardly applied in this setting.
    
    The rest of the paper is organized as follows. In Section~\ref{sec:gamblers}, we define adaptive and oblivious multi-head finite-state gamblers. In Section~\ref{sec:prelim}, we present the algorithmic information ideas used in our later definitions and proofs, including Kolmogorov complexity and gales, and we give a general lemma for translating Kolmogorov complexity bounds (which concern compressibility) into gale bounds (which concern unpredictability) in the context of multi-head finite-state gamblers. In Section~\ref{sec:dims}, we use gales to define adaptive and oblivious multi-head finite-state predimensions and dimensions. In Section~\ref{sec:seq}, we define the family of sequences used in our separation theorem, and then we prove adaptive upper bounds and oblivious lower bounds on this sequence family in Sections~\ref{sec:ub} and~\ref{sec:lb}, respectively. In Section~\ref{sec:hier}, we prove our strict hierarchy theorem, and we conclude with several directions for future work in Section~\ref{sec:conc}.

\section{Adaptive Multi-Head Finite-State Gamblers}\label{sec:gamblers}

    \subsection{String and Sequence Notation}

        Throughout this paper, an alphabet is a finite set $\Sigma$ such that $|\Sigma|\geq 2$. Sequences and strings are indexed starting from 0, and for each sequence $S$ and $m\leq n$, we write $S[m..n]$ for the $(n-m+1)$-symbol string $S[m]S[m+1]\ldots S[n]$. The empty string is denoted $\lambda$, and $S[m..n]=\lambda$ when $m>n$. We denote the length of a string $x$ by $|x|$ and write $\sqsubseteq$ for the prefix relation.
        
        We will often consider finite subsequences of infinite sequences, and it will be convenient to retain the position of each symbol in the subsequence, for which we use placeholder symbols: Given a fixed ``horizon'' $n\in\N$, a sequence $S$ over some alphabet $\Sigma$, and a set $A\subseteq[0,n]$, we let $S[A]$ denote the $(n+1)$-symbol string such that
        \[S[A][i]=\begin{cases}S[i]&\text{if }i\in A\\0&\text{if }i\not\in A.\end{cases}\]
        Here 0 indicates the lexicographically first symbol in $\Sigma$; in particular, the placeholder symbol will be $0^L$ when the alphabet is $\{0,1\}^L$.  Note that $S[A]=S[A\cap\N]$ for all $A$ and that $S[[m,n]]$ is not the $(n-m+1)$-symbol string $S[m..n]$; instead, it is the $(n+1)$-symbol string consisting of $m$ 0s followed by $S[m..n]$.

    \subsection{Adaptive Multi-Head Finite-State Gamblers}
        We now define adaptive multi-head finite-state gamblers as a generalization of the multihead finite-state gamblers defined in~\cite{mhfsd}, which we reclassify below as \emph{oblivious} multihead finite-state gamblers. Note that while \emph{adaptive} often means \emph{not oblivious} in informal language, our formal definitions treat oblivious gamblers as a special case of adaptive gamblers.
        \begin{definition}
            Let $h$ be a positive integer. An \emph{adaptive $h$-head finite-state gambler} (\emph{adaptive $h$-FSG}) is a 6-tuple $G=(Q,\Sigma,\delta,\beta,q_0,c_0)$, where
            \begin{itemize}
                \item $Q$ is a finite, non-empty state space;
                \item $\Sigma$ is an alphabet;
                \item $\delta:Q\times\Sigma^h\to Q\times\{0,1\}^{h-1}$ is the \emph{transition function};
                \item $\beta:Q\to\Delta_\Q(\Sigma)$, where $\Delta_\Q(\Sigma)$ denotes the set of all rational-valued discrete probability distributions over $\Sigma$, is the \emph{betting function};
                \item $q_0\in Q$ is the \emph{initial state}; and
                \item $c_0\in[0,\infty)$ is the \emph{initial capital}.
            \end{itemize}
        \end{definition}
        There are $h$ \emph{heads}: $h-1$ \emph{trailing heads}, numbered 1 through $h-1$, and a \emph{leading head}, numbered $h$. All heads start at position $0$, and the gambler proceeds through an input sequence $S\in\Sigma^\omega$ in discrete time steps. In each step, when the leading head is at some position $S[n-1]$ and the gambler is in state $q$, the following events happen simultaneously.
        \begin{itemize}
            \item The gambler places bets on each symbol in $\Sigma$ according to its current state $q$; allocating to each symbol $b\in\Sigma$ the fraction $\beta(q)(b)$ of its current capital.
            \item The current state is updated to the first component of $\delta(q,(b_1,\ldots,b_h))$, where each $b_i$ is the symbol in $S$ at the position of head $i$.
            \item The leading head moves to the right, and some subset of the trailing heads also move. Specifically, head $i$ will move right if and only if the $i$\textsuperscript{th} bit of the second component of $\delta(q,(b_1,\ldots,b_h))$ is 1.
        \end{itemize}
        When the sequence $S$ is fixed, for each $n$ we write $q_n$ for the state of the gambler and $\pi_i(n)$ for the position of head $i$ after $n$ of these time steps.

    \subsection{Oblivious Multi-Head Finite-State Gamblers}

        Here we characterize oblivious multi-head finite-state gamblers, as defined in~\cite{mhfsd}, formally as a special case of adaptive finite-state gamblers. In particular, the parts of the gambler that govern its movements need to be decoupled from the parts that read symbols in the input sequence.
        \begin{definition}
            Let $h$ be a positive integer. An adaptive $h$-head finite-state gambler
            \[G=(Q,\Sigma,\delta,\beta,q_0,c_0)\]
            is an \emph{oblivious $h$-head finite-state gambler} if there are sets $P$ and $T$ and functions $\delta_P:P\times \Sigma^h\to P$, $\delta_T:T\to T$, and $\mu:T\to\{0,1\}^{h-1}$ such that $Q=P\times T$ and, for all $(p,t)\in Q$ and $(b_1,\ldots,b_h)\in\Sigma^h$,
            \[\delta((p,t),(b_1,\ldots,b_h))=\big((\delta_P(p,(b_1,\ldots,b_h)),\delta_T(t)),\mu(t)\big).\]
        \end{definition}
        Thus, the component of the state from $T$ governs the head movements, while only the component from $P$ depends on the symbols that are read, and the head movements are therefore independent of the input sequence.

        As observed in~\cite{mhfsd}, the obliviousness condition implies that each trailing head moves at what is essentially a constant speed, in the following sense.
        \begin{observation}[\cite{mhfsd}]\label{obs:speed}
            In an oblivious $h$-FSG, for each trailing head $i$, there is a speed parameter $\eta_i$ such that, for all $n$, $\eta_i n-|T|\leq \pi_i(n)\leq\eta_i n+|T|$. Since heads with speed parameters 0 or 1 can be simulated by the leading head using finitely many states, one can assume without loss of generality that $\eta_i\in(0,1)$.
        \end{observation}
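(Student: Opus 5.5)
The plan is to exploit the product structure $Q=P\times T$. Since $\delta_T:T\to T$ is a function on a finite set, the sequence $t_0,t_1,t_2,\ldots$ of $T$-components, where $t_{n+1}=\delta_T(t_n)$, is eventually periodic: there are a preperiod $a$ and a period $p$ with $a+p\leq|T|$ and $t_{n+p}=t_n$ for all $n\geq a$. The movement bits at step $n$ are $\mu(t_n)$, which depend only on $t_n$. So the head positions are determined entirely by this deterministic orbit and do not depend on $S$. Concretely, $\pi_i(n)=\sum_{m<n}\mu(t_m)_i$.

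Next I would set
\[
k_i=\sum_{m=a}^{a+p-1}\mu(t_m)_i\in[0,p], \qquad \eta_i=\frac{k_i}{p}.
\]
Write $n=a+qp+r$ with $0\leq r<p$ (for $n\geq a$). Then
\[
\pi_i(n)=\sum_{m<a}\mu(t_m)_i+qk_i+\sum_{m=a}^{a+r-1}\mu(t_m)_i .
\]
Compare this with $\eta_i n=\eta_i a+qk_i+\eta_i r$. The $qk_i$ terms cancel. What remains is a difference of two quantities in $[0,a+r]$ and $[0,a+r]$ respectively, so the error is at most $a+r<a+p\leq|T|$. For $n<a$, both $\pi_i(n)$ and $\eta_i n$ lie in $[0,n]\subseteq[0,a)$, so the bound also holds there. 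The only real care in the whole argument is this bookkeeping with the preperiod; the error must be bounded by $|T|$, not $2|T|$. Using the fact that both quantities are nonnegative and bounded by $a+r$ gives the bound directly.

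For the WLOG claim, consider the two extreme speeds separately.

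If $\eta_i=0$, then $\pi_i(n)\leq|T|$ for all $n$. The head therefore only ever visits positions in $[0,|T|]$. A modified gambler can record the symbols $S[0..|T|]$ in its finite state as the leading head passes them. It can also track head $i$'s position, which is a function of $t_n$ and $\min(n,a+p)$, both finitely encodable. Using these, it supplies the symbol that head $i$ would read.

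If $\eta_i=1$, then $n-\pi_i(n)\leq|T|$. Head $i$ always lies within a window of bounded size behind the leading head. A modified gambler keeps a buffer of the last $|T|+1$ symbols read by the leading head, together with the offset $n-\pi_i(n)$, which is bounded. It reads head $i$'s symbol from the buffer.

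In both cases the modified gambler is still oblivious. The movement-tracking information is added to the $T$ component and evolves deterministically, while the buffer is added to the $P$ component. The betting behaviour is unchanged, so the capital process is identical. Repeating this for each such head removes every head with speed in $\{0,1\}$, leaving only heads with $\eta_i\in(0,1)$. This reduction lowers the head count, which is harmless for upper bounds on capital with $h$ heads.
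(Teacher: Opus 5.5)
Your proposal is correct and matches the reasoning the paper relies on: the paper only states this observation, citing Huang, Li, Lutz, and Lutz, and your argument fills it in. Eventual periodicity of the $T$-orbit gives $\eta_i=k_i/p$ with error at most $a+r<|T|$, and heads of speed $0$ or $1$ are removed by a finite-buffer oblivious simulation. One small formal point: in this paper's definition $\delta_P:P\times\Sigma^h\to P$ does not see the $T$-component, so the position/offset counter that selects the buffered symbol must also be kept in the $P$-component (as a deterministic copy of the clock), not only in $T$.
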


\section{Algorithmic Information Preliminaries}\label{sec:prelim}
    Fix an alphabet $\Sigma$. We are primarily concerned with how predictable sequences are to multi-head finite-state gamblers, and we will quantify this predictability using $s$-gales. Predictability is closely tied to compressibility, so we will also reason about the compressibility of strings in these sequences, formalized using Kolmogorov complexity.

    \subsection{$s$-Gales and Martingales}
        \begin{definition}
            For $s\in[0,\infty)$, an \emph{$s$-gale} is a function $d:\Sigma^*\to[0,\infty)$ such that, for all $w\in\Sigma^*$,
            \[d(w) = |\Sigma|^{-s}\sum_{b\in \Sigma} d(wb).\]
            A \emph{martingale} is a 1-gale.
        \end{definition}
        Intuitively, we regard an $s$-gale as representing the capital of a gambler who places successive bets on symbols in a sequence. After gambling on some string $w$, the gambler has capital $d(w)$, and it now places another bet, allocates portions of its capital according to its beliefs about the next symbol. If the gambler allocates some portion $\alpha d(w)$ to the symbol $b$ and $b$ is the actual next symbol, the gambler's new capital is $d(wb)=|\Sigma|^s\alpha d(w)$. A martingale is considered the ``fair'' version of this setup; in a martingale, if the gambler makes a uniform bet, allocating $1/|\Sigma|d(w)$ to each symbol $b$, then it is guaranteed to maintain constant capital. The lower $s$ is, the less favorable the environment is for the gambler. The gambler's goal is to have unbounded capital, either infinitely often or cofinitely often; these outcomes characterize \emph{success} and \emph{strong success}, respectively.
    
        \begin{definition}
            An $s$-gale $d$ \emph{succeeds} on a sequence $S\in\Sigma^{\omega}$, and we write $S\in S^\infty(d)$, if
            \[\limsup_{n \to \infty} d(S[0..n-1]) = \infty.\]
            We say $d$ \emph{succeeds strongly} on $S$, and write $S\in S^\infty_{\strong}(d)$, if
            \[\liminf_{n \to \infty} d(S[0..n-1]) = \infty.\]
        \end{definition}
    
        As in~\cite{mhfsd}, given a positive integer $h$ and an oblivious or adaptive $h$-head finite-state gambler $G$, the \emph{martingale of $G$} is given by $d_G(\lambda)=c_0$ and, for each string $w\in\Sigma^*$ and symbol $b\in\Sigma$,
        \[d_G(wb)=|\Sigma|d_G(w)\beta(q_{|w|})(b),\]
        where $q_{|w|}$ is the gambler's state after reading $w$. Given any non-negative real $s$, the \emph{$s$-gale of $G$} is defined, for each string $w\in\Sigma^*$, by
        \[d_G^{(s)}(w)=|\Sigma|^{(s-1)|w|}d_G(w).\]
        Note that the gambler does not have access to its martingale (or other $s$-gale) value, which might involve an unbounded amount of information.

    \subsection{Kolmogorov Complexity and Martin-L\"of Randomness}
        Fix a universal prefix Turing machine $\mathcal{U}$. For all strings $\sigma,\tau\in\Sigma^*$, the \emph{conditional (prefix) Kolmogorov complexity} of $\sigma$ given $\tau$ is
        \[K(\sigma\mid\tau)=\min\{|\pi|\mid \pi\in\{0,1\}^*\text{ and }\mathcal{U}(\pi,\tau)=\sigma\},\]
        and the \emph{(prefix) Kolmogorov complexity} of $\sigma$ is $K(\sigma)=K(\sigma\mid\lambda)$.
        
        We will use only a few basic properties of Kolmogorov complexity; more details and background can be found in~\cite{LiVit19}. First, the complexity of a string is never much more than its length multiplied by the log of the alphabet size: for all $x\in\Sigma^*$, $K(x)\leq |x|\log|\Sigma|+O(\log|x|)$. Second, we have \emph{symmetry of information}: for all $x,y\in\Sigma^*$,
        \[K(x,y)-O(1)\leq K(x\mid y)+K(y)\leq K(x,y)+O(\log|y|).\]
        Third, computable functions cannot significantly increase complexity: if $f:\Sigma^*\to\Sigma^*$ is a computable function, then for all $\sigma,\tau\in\Sigma^*$, applying $f$ cannot significantly add to the information content of $x$, in the sense that $K(f(\sigma))\leq K(\sigma)+O(1)$, and by similar reasoning,
        \[K(\tau\mid f(\sigma))\geq K(\tau\mid \sigma)-O(1).\]
    
        A sequence $S\in\Sigma^\omega$ is \emph{Martin-L\"of random}~\cite{Mart66} if $K(S[0..n-1])\geq n\log|\Sigma|-O(1)$ holds for all $n$. Such a sequence is essentially \emph{incompressible}, and it is well-known that almost all sequences have this property~\cite{DowHir10}.

    \subsection{Using Kolmogorov Complexity to Bound Martingales}

    Intuitively, if a string is difficult to compress, then it is also difficult to predict. Furthermore, if it is difficult to compress \emph{even given the information accessible to the trailing heads}, then it is also difficult to predict with access to that information. This idea is formalized in the following lemma, which will be our main tool for proving upper bounds on the martingales of multi-head finite-state gamblers.
    \begin{lemma}\label{lem:dtok}
        Let $\Sigma$ be an alphabet, $h\geq 1$, $G$ be any adaptive or oblivious $h$-head finite-state gambler over $\Sigma$, $\alpha,\delta\in(0,1)\cap\Q$,  $S\in\Sigma^\omega$, $m,n\in\N$, and let $U$ be any set that includes all positions strictly less than $m$ of the trailing heads while the leading head reads $S[m..n]$, i.e.,
        \[U\supseteq[0,m-1]\cap\bigcup_{i=1}^{h-1}[\pi_i(m),\pi_i(n)].\]
        If $n-m$ is sufficiently large and
        $K(S[[m,n]]\mid S[U])\ge (1-\alpha+\delta)(n-m)\log|\Sigma|$,
        then, letting $x=S[m..n]$ and $y=S[0..m-1]$,
        \begin{equation}\label{eq:nowin}
            \max_{w\sqsubseteq x}\frac{d_G(yw)}{d_G(y)}\leq |\Sigma|^{\alpha\cdot(n-m)}.
        \end{equation}
    \end{lemma}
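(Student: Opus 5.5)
We argue by contraposition: assume some prefix $w\sqsubseteq x$ satisfies $d_G(yw)/d_G(y)>|\Sigma|^{\alpha(n-m)}$, and build a short description of $S[[m,n]]$ given $S[U]$. This will contradict the assumed lower bound $K(S[[m,n]]\mid S[U])\ge(1-\alpha+\delta)(n-m)\log|\Sigma|$ once $n-m$ is large.

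The engine is arithmetic coding driven by the gambler's own bets. Fix $G$; its description is an $O(1)$ part of the decoder. The decoder also receives $m$, $n$, $|w|$ (costing $O(\log n)$ bits), together with the state $q_m$ and the head positions $\pi_1(m),\ldots,\pi_{h-1}(m)$.

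The positions cost $O(\log m)$ bits, which is not $o(n-m)$ when $m\gg n-m$. This is the main obstacle. I would avoid it by giving the decoder the positions relative to $m$ only when they lie near $m$. More robustly, note that $S[U]$ is an $(n+1)$-symbol string, so it already encodes $m$ and $n$ up to the placeholder convention. I would spend an additional $O(\log n)$ bits only if necessary. If the lemma's ``sufficiently large'' is meant uniformly in $m$, I would try to encode each $\pi_i(m)$ by its offset from $m$. That offset is at most $m$ but is determined once we know the set of positions read: the trailing heads read exactly $[\pi_i(m),\pi_i(n)]$. I would therefore aim to charge these positions to $U$'s structure, or else accept an $O(\log n)$ term and interpret ``sufficiently large'' accordingly.

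Given this data, the decoder simulates $G$ from step $m$ onward. Whenever a trailing head reads a position $<m$, it looks up the symbol in $S[U]$; the hypothesis on $U$ guarantees every such position lies in $U$. Whenever a head reads a position $\ge m$, that symbol has already been decoded, since trailing heads never pass the leading head. Hence at each step $t$ the decoder knows $q_t$ and thus the distribution $\beta(q_t)$. Arithmetic coding of $w$ under these successive distributions (which are rational and computable) uses at most
\[
-\log\prod_{t}\beta(q_t)(S[t])+2=\log\frac{|\Sigma|^{|w|}d_G(y)}{d_G(yw)}+2
< |w|\log|\Sigma|-\alpha(n-m)\log|\Sigma|+2
\]
bits. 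The remaining symbols $S[m+|w|..n]$ are sent literally using $(n-m+1-|w|)\log|\Sigma|+O(1)$ bits; alternatively, one can use a Shannon–Fano code for uniform bets, which gives the same cost. Finally the decoder writes $S[[m,n]]$ as $m$ zeros followed by the decoded string. The total length is
\[
(1-\alpha)(n-m)\log|\Sigma|+O(\log n)+O(1).
\]

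For $n-m$ large, this total is below $(1-\alpha+\delta)(n-m)\log|\Sigma|$ provided the logarithmic overhead is $o(n-m)$; that is exactly the point about encoding positions discussed above. This contradicts the hypothesis, so \eqref{eq:nowin} holds.
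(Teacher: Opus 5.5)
Your proof is correct to the same standard as the paper's, but the compression step works differently.

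\textbf{The paper's route.} It never encodes the winning prefix directly. It defines $g(z)=\max_{w\sqsubseteq z}d_G(yw)/d_G(y)$ on all $z\in\Sigma^{n-m+1}$. For each prefix length $\ell$ it uses the averaging identity $\sum_{w\in\Sigma^\ell}d_G(yw)/d_G(y)=|\Sigma|^\ell$. From this it shows that the set $Z$ of strings with $g(z)>|\Sigma|^{\alpha(n-m)}$ has fewer than $(n-m+2)|\Sigma|^{(1-\alpha)(n-m)}$ elements. It then describes $S[[m,n]]$ by its index in an enumeration of $Z$ computed from $S[U]$ and the state $q_m$.

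\textbf{Your route.} You write an explicit two-part code: arithmetic-code the winning prefix $w$ under the gambler's own distributions $\beta(q_t)$, then send the tail literally.

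\textbf{Comparison.} The counting route needs only that $g$ is computable, and it is the more standard incompressibility argument. Yours is more constructive. It handles the maximum over prefixes by transmitting $|w|$ instead of taking a union bound over lengths. It relies on the routine fact that arithmetic coding with sequentially computable rational distributions costs $-\log\prod_t\beta(q_t)(S[t])+O(1)$ bits. Both arguments end at $(1-\alpha)(n-m)\log|\Sigma|+O(\log n)$.

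\textbf{The overhead you flagged.} You are right that an adaptive simulation must also be given $m$ and $\pi_1(m),\ldots,\pi_{h-1}(m)$. The paper's machine takes only $(j,q,u)$ and silently absorbs this cost into its $O(\log n)$ term, so your bookkeeping is actually more careful than the paper's.

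You should drop the tentative ideas for recovering the positions from the structure of $U$. $U$ is an arbitrary superset, and the head positions can depend on symbols of $y$ outside $U$. The right resolution is the one you list last: accept the $O(\log n)$ overhead and read ``$n-m$ sufficiently large'' as $n-m\gg\log n$. That is exactly how the paper's proof concludes. It is also all the applications need, since there $n-m=\Theta(n)$.
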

    This is a direct generalization of Lemma 5.7 in~\cite{mhfsd}, which was stated only for oblivious gamblers and for specific choices of $\Sigma$, $S$, $m$, $U$, $\alpha$, and $\delta$. We closely follow that proof, which uses a standard simulation technique to relate incompressibility and unpredictability; the more general form stated here requires no major changes to the original argument.

    \begin{proof}[Proof of Lemma~\ref{lem:dtok}]
     The function $g:\Sigma^{n-m+1}\to\Q$ defined by
    \[g(z)=\max_{w\sqsubseteq z}\frac{d_G(yw)}{d_G(y)},\]
    is computable given $u=S[U]$ and the state $q$ of $G$ after the leading head has read $y$; as $u$ contains all information about $y$ that the trailing heads could possibly access while the leading head moves through each suffix $w$ of length at most $n-m+1$, it is possible to simulate $G$ to obtain the state sequence and betting ratios needed to compute $g(z)$. (Unlike in~\cite{mhfsd}, $u$ does not necessarily contain \emph{all} information the trailing heads could access. In the adaptive setting, a trailing head might stay very close to the leading head, so a trailing head might reach position $m$ before the leading head reaches position $n$. Thus, $u$ only gives us the trailing heads' information about $y$. This change in the definition of $u$ is harmless, though, as information about $S[m..n]$ is not needed to simulate $G$ on $yw$.) In particular, if $q_{m+i}$ is the gambler's state after the leading head has read $S[0..m+i-1]$, then
    \[g(z)=\max_{w\sqsubseteq z}\prod_{i=0}^{|w|-1}|\Sigma|\beta(q_{m+i})(w[i]).\]

    Let 
    \[Z=\left\{z\in\Sigma^{n-m+1}\;\middle|\;g(z)>|\Sigma|^{\alpha\cdot(n-m)}\right\}.\]
    Since $g$ is computable given $u$ and $q$, there is some Turing machine $M$ that, given input
    \[(j,q,u)\in\N\times Q\times\Sigma^{n+1},\]
    prints $0^mz$, where $z$ is the lexicographically $j$\textsuperscript{th} string in $Z$ (and 0 is the lexicographically first symbol in $\Sigma$).
    
    Before proceeding, we bound $|Z|$. For each $\ell\in\N$, the martingale property gives us
    \[\sum_{w\in\Sigma^\ell}\frac{d_G(yw)}{d_G(y)}=|\Sigma|^\ell,\]
    so
    \[\left|\left\{w\in\Sigma^\ell\;\middle|\;\frac{d_G(yw)}{d_G(y)}>|\Sigma|^{\alpha\cdot(n-m)}\right\}\right|<|\Sigma|^{\ell-\alpha\cdot(n-m)},\]
    and for each string $w$ in this set, there are $|\Sigma|^{n-m-|w|}$ extensions of $w$ in $Z$. Therefore,
    \begin{align*}
        |Z|&\leq\sum_{\ell=0}^{n-m+1}|\Sigma|^{\ell-\alpha\cdot(n-m)}|\Sigma|^{n-m-\ell}\\
        &<(n-m+2)|\Sigma|^{(1-\alpha)(n-m)}.
    \end{align*}
    
    Now assume~\eqref{eq:nowin} does not hold. Then there is some $j$ such that $M(j,q,u)$ prints $0^mx=0^mS[m..n]=S[[m,n]]$. Therefore, by our construction of $M$,
    \begin{align*}
        K(S[[m,n]]\mid u)&\leq K(j,q)+O(1)\\
        &\leq K(j)+O(1)\\
        &\leq \log(|Z|)+O(\log n)\\
        &\leq (1-\alpha)(n-m)\log|\Sigma|+O(\log n)\\
        &< (1-\alpha+\delta)(n-m)\log|\Sigma|,
    \end{align*}
    when $n-m$ is sufficiently large. Thus, we have shown the contrapositive of the implication in the theorem statement.
\end{proof}

\section{Adaptive Multi-Head Finite-State Dimensions}\label{sec:dims}
    To define adaptive multi-head finite predimensions and dimensions, we follow the standard template for defining effective dimension notions with $s$-gales, just as Huang, Li, Lutz, and Lutz~\cite{mhfsd} did in defining oblivious multi-head finite predimensions and dimensions. We deviate from the terminology and notation of~\cite{mhfsd} by including \emph{oblivious} explicitly and prepending an \emph{o} to the dimension notations defined in~\cite{mhfsd} to emphasize that these are the oblivious varieties.

    \begin{definition}[\cite{mhfsd}]
        Let $S$ be a sequence. For each positive integer $h$, the \emph{oblivious $h$-head finite-state predimension} of $S$ is
        \[\odimFS{(h)}(S)=\inf\left\{s\in[0,\infty)\;\middle|\; \exists\text{ oblivious $h$-FSG $G$ with }\left(S\in S^\infty\big(d_G^{(s)}\big)\right)\right\},\]
        and the \emph{oblivious $h$-head finite-state strong predimension} of $S$ is
        \[\oDimFS{(h)}(S)=\inf\left\{s\in[0,\infty)\;\middle|\; \exists\text{ oblivious $h$-FSG $G$ with }\left(S\in S_{\strong}^\infty\big(d_G^{(s)}\big)\right)\right\}.\]
        The \emph{oblivious multi-head finite-state dimension} and \emph{oblivious multi-head finite-state strong dimension} of $S$ are
        \[\odimFS{\mh}(S)=\inf_{h\geq 1}\odimFS{(h)}(S)\text{ and }\oDimFS{\mh}(S)=\inf_{h\geq 1}\oDimFS{(h)}(S).\]
    \end{definition}
    The adaptive versions $\adimFS{(h)}(S)$, $\aDimFS{(h)}(S)$, $\adimFS{\mh}(S)$, and $\aDimFS{\mh}(S)$ are defined identically, except with adaptive $h$-FSGs replacing the oblivious $h$-FSGs. Since oblivious $h$-FSGs are a special case of adaptive $h$-FSGs, it is immediate that the adaptive versions of these quantities are bounded above by the oblivious versions.
    
\section{Defining our Family of Sequences}\label{sec:seq}

    In this section we define, for each integer $h\geq 2$, sequences with patterns that can be exploited more thoroughly by adaptive $h$-FSGs than by oblivious $h$-FSGs; we will prove this separation in Sections~\ref{sec:ub} and~\ref{sec:lb}.
    
    Fix any constant $L\geq 1$. For each integer $h\geq 2$, we now define the function
    \[\Phi_h:(\{0,1\}^L)^\omega\to(\{0,1\}^L\cup\{\$\})^\omega.\]
    Here $\$$ is a special symbol outside the original alphabet.
    
    Given any sequence $S\in(\{0,1\}^L)^\omega$, we construct the sequence $\Phi_h(S)$ by modifying $S$ in the following way for each $k\in\N$. Let $s_k=(h+1)^{2k}h$ and $t_k=(h+1)^{2k+1}$, and let $\oplus$ denote the bitwise parity (XOR) operation. First, set $\Phi_h(S)[s_k]=\Phi_h(S)[t_k]=\$$. Second, for each integer $n\in(s_k,t_k)$ such that $n \bmod{(h+1)}=0$, set
    \begin{equation}\label{eq:seqdef}
        \Phi_h(S)[n]=\left(\bigoplus_{i=1}^{h-2} \Phi_h(S)\left[\frac{i}{h+1}n\right]\right)\oplus \Phi_h(S)\left[\frac{h-(k\bmod 2)}{h+1}n\right].
    \end{equation}
    For all other indices $n$, set $\Phi_h(S)[n]=S[n]$.

    The sequence $\Phi_h(S)$ thus oscillates between three modes, with the symbol $\$$ indicating a switch between modes. From each $t_k$ to $s_{k+1}$, the sequence is in one mode: it is identical to $S$, with no additional structure. From each $s_k$ to $t_k$, a $\frac{1}{h+1}$ fraction of the symbols are replaced by the bitwise parity of $h-1$ symbols from earlier in the sequence. This comprises two modes: the relative positions of the first $h-2$ of these earlier symbols are always the same, but the relative position of the rightmost of the symbols differs according to whether $k$ is even or odd. Intuitively, the fact that this last piece of pertinent information moves between two different relative positions is precisely what makes this sequence more suitable for an adaptive gambler than an oblivious one.

    \begin{figure}
        \centering
        \resizebox{\linewidth}{!}{%
        \begin{tikzpicture}[>=Latex,every node/.style={draw=gold,minimum size=8pt,circle,inner sep=0pt,font=\tiny}]
            \node (s0) at (1.585,0) {\$};
            \node (t0) at (2,0) {\$};
            \node (s1) at (5.585,0) {\$};
            \node (t1) at (6,0) {\$};
            \coordinate (s11) at (3.585,0) {};
            \coordinate (t11) at (4,0) {};
            \coordinate (s12) at (4.585,0) {};
            \coordinate (t12) at (5,0) {};
            \node (s2) at (9.585,0) {\$};
            \node (t2) at (10,0) {\$};
            \coordinate (s21) at (7.585,0) {};
            \coordinate (t21) at (8,0) {};
            \coordinate (s23) at (9.17,0) {};
            \coordinate (t23) at (9.585,0) {};
            \node (s3) at (13.585,0) {\$};
            \node (t3) at (14,0) {\$};
            \coordinate (s31) at (11.585,0) {};
            \coordinate (t31) at (12,0) {};
            \coordinate (s32) at (12.585,0) {};
            \coordinate (t32) at (13,0) {};
            \draw[semithick] (0,0) -- (s0);
            \draw[ultra thick,gold] (s0) -- (t0);
            \draw[semithick] (t0) -- (s1);
            \draw[ultra thick,gold] (s1) -- (t1);
            \draw[semithick] (t1) -- (s2);
            \draw[ultra thick,gold] (s2) -- (t2);
            \draw[semithick] (t2) -- (s3);
            \draw[ultra thick,gold] (s3) -- (t3);
            \draw[ultra thick] (s11) -- (t11);
            \draw[ultra thick] (s12) -- (t12);
            \draw[ultra thick] (s21) -- (t21);
            \draw[ultra thick] (s23) -- (s2);
            \draw[ultra thick] (s31) -- (t31);
            \draw[ultra thick] (s32) -- (t32);
            \draw[thick,gold,->] (5.7925,0) to[out=120,in=60,looseness=1] (3.7925,0);
            \draw[thick,gold,->] (5.7925,0) to[out=-120,in=-60,looseness=1.5] (4.7925,0);
            \draw[thick,gold,->] (9.7925,0) to[out=120,in=60,looseness=1] (7.7925,0);
            \draw[thick,gold,->] (9.7925,0) to[out=-90,in=-90,looseness=3] (9.3775,0);
            \draw[thick,gold,->] (13.7925,0) to[out=120,in=60,looseness=1] (11.7925,0);
            \draw[thick,gold,->] (13.7925,0) to[out=-120,in=-60,looseness=1.5] (12.7925,0);
        \end{tikzpicture}}
        \caption{The structure of a sequence $\Phi_3(S)$, shown on a logarithmic scale for indices 0 to $t_3=16384$. In the short intervals between \$ symbols, one out of every four symbols depends on earlier symbols in the sequence. Specifically, if $m\in\N$ and $4m\in (s_k,t_k)$,  $\Phi_3(S)[4m]$ is either $\Phi_3(S)[m]\oplus \Phi_3(S)[2m]$ or $\Phi_3(S)[m]\oplus \Phi_3(S)[3m]$, depending on the parity of $k$. Note that $t_0=s_0+1$, so the first open interval $(s_0,t_0)$ contains no indices.}
\end{figure}

\section{Upper Bound for Adaptive Multi-Head Finite-State Predimensions}\label{sec:ub}
    The self-referential structure imposed by the function $\Phi_h$ defined in Section~\ref{sec:seq} makes every sequence $\Phi_h(S)$ somewhat predictable to an adaptive $h$-head finite-state gambler. In this section, we construct such gamblers to prove the following upper bounds on the adaptive $h$-head finite-state predimensions of such sequences.

    \begin{theorem}\label{thm:aub}
        For every $h\geq 2$ and every sequence $S\in(\{0,1\}^L)^\omega$,
        \[\adimFS{(h)}(\Phi_h(S))\leq  \frac{L}{\log(2^L+1)}\left(1 - \frac{1}{(h+2)h}\right) \]
        and 
        \[\aDimFS{(h)}(\Phi_h(S))\leq  \frac{L}{\log(2^L+1)}\left(1 - \frac{1}{(h+2)(h+1)h^2}\right).\]
    \end{theorem}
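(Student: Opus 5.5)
The plan is to build, for every small $\epsilon>0$, one adaptive $h$-FSG $G_\epsilon$ over $\Sigma=\{0,1\}^L\cup\{\$\}$. On the positions $n\equiv 0 \pmod{h+1}$ inside $(s_k,t_k)$, $G_\epsilon$ bets almost everything on the parity predicted by \eqref{eq:seqdef}. Everywhere else it bets $(1-\epsilon)2^{-L}$ on each symbol of $\{0,1\}^L$ and $\epsilon$ on $\$$. Letting $\epsilon\to 0$ then gives both bounds.

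\textbf{Accounting.} Write $N(n)$ for the number of predicted positions below $n$. Then $\log_{|\Sigma|} d_G^{(s)}(\Phi_h(S)[0..n-1])$ equals
\[
s n-\frac{L}{\log(2^L+1)}\bigl(n-N(n)\bigr)-O(\epsilon' n)-O(\log n),
\]
where $\epsilon'\to 0$ as $\epsilon\to 0$. The $O(\log n)$ term covers the $\$$ symbols, of which there are only logarithmically many, each costing a factor $\epsilon$. It also covers finitely many early positions where the heads are not yet aligned.

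The interval $(s_j,t_j)$ contributes $(t_j-s_j)/(h+1)=(h+1)^{2j-1}$ predicted positions. Summing the geometric series gives
\[
N(t_k)\approx \frac{(h+1)^{2k+1}}{(h+1)^2-1}=\frac{t_k}{h(h+2)} .
\]
For the $\adimFS{(h)}$ bound we evaluate at $n=t_k$, which is the best case. This yields success for every $s>\frac{L}{\log(2^L+1)}\bigl(1-\frac{1}{h(h+2)}\bigr)+O(\epsilon')$.

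For the $\aDimFS{(h)}$ bound we need the liminf, and the worst moment is just before $s_{k+1}$. There $N\approx t_k/(h(h+2))$ and $s_{k+1}=h(h+1)t_k$, so the predicted fraction is $\frac{1}{(h+2)(h+1)h^2}$. Between $s_{k+1}$ and $t_{k+1}$ the fraction only increases. Hence every $s$ above the second bound gives strong success.

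\textbf{Heads.} The finite control does the following.
\begin{itemize}
\item It counts the leading position mod $h+1$.
\item It counts the $\$$ symbols read by the leading head mod $4$. This tells it whether it is inside some $(s_k,t_k)$ and the parity of $k$.
\item Trailing head $i$, for $1\le i\le h-2$, moves in exactly $i$ of every $h+1$ steps. It is then at position exactly $\frac{i}{h+1}n$ whenever $h+1\mid n$, which is obliviously achievable.
\end{itemize}
With these in place, the betting state at a predicted position has all $h-1$ operands of \eqref{eq:seqdef} under its heads. It updates a running XOR and places its bet accordingly.

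\textbf{Main obstacle.} The hard part is head $h-1$. It must sit at exactly $\frac{h-(k\bmod 2)}{h+1}n$ throughout $(s_k,t_k)$, so it has to switch between the two speeds $\frac{h}{h+1}$ and $\frac{h-1}{h+1}$ during each free block $(t_k,s_{k+1})$. Ending at exactly the right offset requires data-dependent synchronization.

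I would try to use the $\$$ markers themselves as synchronization points. When head $h-1$ reads the $\$$ at $t_k$ (or at $s_k$), it stops. It restarts at the new speed when the leading head reaches the unique time $n^*$ at which the new target $c\,n^*$ equals that marker's position. The point $n^*$ has the form $(h+1)^{2k+2}/\text{const}$, and it must be detected finite-state-ly. The candidate detector is another head reading the same $\$$: one of the fixed-speed heads $1,\dots,h-2$, or head $h-1$ itself in the previous phase, combined with a mod-$(h+1)$ residue check.

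The space in $(t_k,s_{k+1})$ is ample, since $s_{k+1}/t_k=h(h+1)$, so the head can always reach its target in time. For small $h$, where there are few fixed-speed heads (for $h=2$ there are none), the detection step has to be checked explicitly, and I would verify both switching directions separately. If exact stopping is awkward, a fallback is to realign against marker $t_k$ using a wait-then-sprint schedule. Head $h-1$ is far behind the leading head and may move at speed $1$, so it can sprint to whatever exact position is computable from the marker timing.
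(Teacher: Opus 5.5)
Your accounting is right and matches the paper. The density of predicted positions, $N(t_k)\approx t_k/(h(h+2))$, the liminf taken just before $s_{k+1}$, and the resulting gale exponents are exactly the paper's $\rho_1,\rho_2$ computation. The fixed-speed heads $1,\dots,h-2$ and the mod-$4$ count of \$ symbols are also handled as in the paper.

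The gap is the step you flag as the main obstacle. You never give a finite-state rule that carries head $h-1$ from the line $\frac{h-(k\bmod 2)}{h+1}n$ to the line $\frac{h-((k+1)\bmod 2)}{h+1}n$, and the rule you sketch does not work.
\begin{itemize}
\item When speeding up ($\frac{h-1}{h+1}\to\frac{h}{h+1}$), head $h-1$ reaches the marker $t_k$ only at time $\frac{h+1}{h-1}t_k$, which is after your restart time $n^*=\frac{h+1}{h}t_k$. Likewise it reaches $s_k$ at time $\frac{h}{h-1}t_k$, after the corresponding $n^*=t_k$. So it is behind the target line, and stopping at a marker is useless.
\item When slowing down, the restart times are $\frac{h}{h-1}t_k$ (marker $s_k$) or $\frac{h+1}{h-1}t_k$ (marker $t_k$), and none of your proposed detectors signals them. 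A fixed-speed head $i\le h-2$ reads these markers at times $\frac{h}{i}t_k$ and $\frac{h+1}{i}t_k$, which would match only for $i=h-1$. Head $h-1$ itself reads them at times $t_k$ and $\frac{h+1}{h}t_k$. For $h=2$ there are no fixed-speed heads at all. And $n^*$ need not even be an integer (e.g.\ $h=4$).
\item The wait-then-sprint fallback has the same defect: a finite-state control cannot sprint \emph{to a computable position} without some event telling it when to stop.
\end{itemize}

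The missing idea is that no detection is needed. You already observe $s_{k+1}/t_k=h(h+1)$. Because this ratio does not depend on $k$, going from position $c\,t_k$ at time $t_k$ to position $c'\,s_{k+1}$ at time $s_{k+1}$ takes a single constant speed
\[v=\frac{c'h(h+1)-c}{h^2+h-1}.\]
This is $\frac{h^3+h^2-h+1}{h^3+2h^2-1}$ to speed up and $\frac{h^3-2h}{h^3+2h^2-1}$ to slow down. The block length $s_{k+1}-t_k=(h+1)^{2k}(h^3+2h^2-1)$ is a multiple of the period $h^3+2h^2-1$. So consider the periodic schedule that moves head $h-1$ in the first $v(h^3+2h^2-1)$ of every $h^3+2h^2-1$ steps. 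Start it when the \emph{leading} head reads the \$ at $t_k$, which your mod-$4$ counter already detects. It lands head $h-1$ exactly on $c'\,s_{k+1}$ at time $s_{k+1}$. Where the head sits inside $(t_k,s_{k+1})$ is irrelevant, because no predictions are made there. This is what the paper's Modes 1 and 3 do, and with this rule in place the rest of your proposal completes the proof.
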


    Let $h\geq 2$, $S\in(\{0,1\}^L)^\omega$, $X=\Phi_h(S)$, and $\varepsilon\in(0,1)\cap\Q$. To prove Theorem~\ref{thm:aub}, we construct an adaptive $h$-head finite-state gambler $G$ over the alphabet $\Gamma =\{0,1\}^L \cup \{ \$\}$ for the sequence $X$.
    
    The heads of $G$ will be numbered $1,\ldots,h$, where $h$ is the leading head. Each trailing head $i\leq h-2$ moves at a fixed speed of $\frac{i}{h + 1}$, and the speed of the last trailing head $h-1$ will range between $\frac{h-1}{h + 1}$ and $\frac{h}{h+1}$. The gambler $G$ has five modes whose behaviors can be implemented with finitely many states. These modes will dictate the betting pattern and the speed of head $h-1$. Two modes are for intervals between positions $s_k$ and $t_k$, where a $\frac{1}{h+1}$ fraction of the sequence's symbols are bitwise parities of earlier symbols. In these modes, the trailing heads access the relevant earlier symbols, and $G$ bets accordingly. During two other modes, $G$ is repositioning head $h-1$, either by slowing down or speeding up, to be prepared for the next $s_k$-to-$t_k$ interval. The fifth mode (which we number $-1$ to make the transition rule simpler) is a one-time start mode.
    
    To define the gambler in more detail, we first define probability distributions that $G$ will use for betting. Define $\nu:\Gamma\to[0,1]$ by $\nu(\$)=\varepsilon$ and $\nu(b)=(1-\varepsilon)/2^L$ for all $b\in\{0,1\}^L$. For each symbol $a\in\Gamma$, define the distribution $\chi_a:\Gamma\to[0,1]$ by $\chi_a(b)=1-\varepsilon$ if $b=a$ and $\chi_a(b)=\varepsilon/2^L$ otherwise.

    The gambler always ``hedges'' for the possibility that the symbol $\$$ will appear rather than attempting to predict its appearance. Because the instances of this symbol are exponentially sparse in $X$, this hedge can safely be arbitrarily small.

    \begin{itemize}
        \item In Mode $-1$, head $h-1$ moves right in the first $h-1$ out of every $h+1$ steps, and $G$ bets according to $\nu$.
        \item In Mode 0, head $h-1$ moves right in the first $h-1$ out of every $h+1$ steps. 
        If the current position is not a multiple of $h+1$, then $G$ bets according to $\nu$. If the current position is a multiple of $h+1$, let $b_1,\ldots,b_h$ be the symbols observed by the $h$ heads. If $b_{h-1}=\$$, then $G$ bets according to $\chi_\$$; otherwise, it bets according to $\chi_{\bigoplus_{i=1}^{h-1}b_i}$.
        \item In Mode 1, head $h-1$ \emph{speeds up}, moving right in the first $h^3+h^2-h+1$ out of every $h^3 + 2h^2 -1$ steps, and $G$ bets according to $\nu$.
        \item In Mode 2, head $h-1$ moves right in the first $h-1$ out of every $h+1$ steps, and the betting behavior of $G$ is identical to Mode 0.
        \item In Mode 3, head $h-1$ \emph{slows down}, moving right in the first $h^3-2h$ out of every $h^3 + 2h^2 -1$ steps, and $G$ bets according to $\nu$.
    \end{itemize}
    The gambler $G$ starts in mode $-1$. Whenever the leading head reads a \$ in mode $m$, the gambler moves into mode $(m+1)\bmod 4$; mode $-1$ is never revisited.
        
    \begin{lemma}\label{lem:rho}
        For $n\in\N$, let
        \[\rho(n)=\frac{1}{n}\left|\left\{i< n\;\middle|\;\frac{d_G(X[0..i])}{d_G(X[0..i-1])}=(1-\varepsilon)|\Gamma|\right\}\right|.\]
        Informally, this is the fraction of symbols in $X[0..n-1]$ on which $G$ makes a full winning bet. Let $\rho_1 = \limsup_{n\to\infty}\rho(n)$  and $\rho_2 = \liminf_{n\to\infty}\rho(n)$. Then
        \[\rho_1=\frac{1}{h(h+2)}\text{ and }\rho_2=\frac{1}{h^2(h+1)(h+2)}.\]
    \end{lemma}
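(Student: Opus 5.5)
The plan has two parts. First, a structural claim: during every interval $(s_k,t_k)$ with $k\geq 1$, $G$ makes a full winning bet at exactly the positions $n\equiv 0 \pmod{h+1}$, and at no other position. Second, a counting argument that turns this into the two limits.

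\textbf{Step 1: no spurious full wins.} At every position where $G$ bets according to $\nu$, the capital ratio is either $|\Gamma|\varepsilon$ or $|\Gamma|(1-\varepsilon)/2^L$. Neither equals $(1-\varepsilon)|\Gamma|$, since $\varepsilon<1/2$ can be assumed and $L\geq 1$. A full win can therefore occur only in Modes 0 and 2, at multiples of $h+1$, and only when the $\chi$-bet is correct. Every position in the gaps $[t_k,s_{k+1}]$, and every position not divisible by $h+1$, is excluded.

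\textbf{Step 2: the heads are correctly positioned.} I will show by induction on $k$ that the mode of $G$ while the leading head is in $(s_k,t_k)$ is $0$ or $2$, according to the parity of $k$, as the \$-triggered cycle $-1\to0\to1\to2\to3\to0$ dictates. I will also show that at each $n=(h+1)m\in(s_k,t_k)$, head $i\leq h-2$ sits exactly at $im$ and head $h-1$ sits exactly at $(h-(k\bmod 2))m$. Given this, the observed $b_i$ are exactly the operands in \eqref{eq:seqdef}. None of them is \$: \$ appears only at $s_j,t_j$, and these are not of the form $im$ with $(h+1)m$ in an interval. So the $\chi$-bet is correct and each such position is a full win.

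The heads $i\leq h-2$ move periodically with period $h+1$ from time $0$, so they sit at $im$ exactly. For head $h-1$, I will compute its position at each $t_k$ and $s_{k+1}$. During a speed-change mode on the gap $(t_k,s_{k+1})$, it moves a net amount equal to the gap length times the stated ratio (with denominator $h^3+2h^2-1$). The gap lengths are $(h+1)^{2k+1}(h^2+h-1)$ and similar. The point to check is that the net displacement lands the head exactly at the ratio required for the next interval. This needs the relevant lengths to be divisible by the step period, and these divisibilities come from the powers of $h+1$.

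This bookkeeping is the main obstacle. Each mode switch happens one step after a \$ is read, so there are $O(1)$ boundary offsets, and the periodic movement patterns must be in phase with multiples of $h+1$ at each $s_k$. I would first verify it concretely for $h=2$ and $k=1,2$. If exact alignment fails, I would argue that the error is bounded. I would then reconcile it with the rule for which of the two speeds ($\frac{h-1}{h+1}$ versus $\frac{h}{h+1}$) is needed for the parity of $k$ in each mode. Finitely many bad positions per interval would still change only lower-order terms in the counts.

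\textbf{Step 3: counting.} The interval $(s_k,t_k)$ contributes about $(t_k-s_k)/(h+1)=(h+1)^{2k-1}$ full wins. Summing the geometric series gives about $(h+1)^{2k+1}/(h(h+2))$ full wins up to $t_k$. On each interval $\rho$ increases, because the local density $1/(h+1)$ exceeds the current average, and on each gap it decreases. Hence the $\limsup$ is attained along $n=t_k$ and the $\liminf$ along $n=s_{k+1}=(h+1)^{2k+2}h$. These give
\[\rho_1=\frac{1}{h(h+2)}\quad\text{and}\quad\rho_2=\frac{(h+1)^{2k+1}}{h(h+2)\,(h+1)^{2k+2}h}=\frac{1}{h^2(h+1)(h+2)},\]
with the $O(1)$-per-interval errors vanishing in the limit.
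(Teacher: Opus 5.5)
Your Step 3 is the paper's entire proof. The paper asserts that $\rho$ peaks at $t_k$ and bottoms out at $s_k$, and that each interval $(s_k,t_k)$ yields $\frac{t_k-s_k}{h+1}\pm O(1)$ full wins; it then sums the same geometric series. Your Step 1 is a correct addition. You are right that Step 2 is the crux, but as you have set it up it fails: for $G$ exactly as specified, head $h-1$ is \emph{not} at $(h-(k\bmod 2))m$. Modes 0 and 2 both move it at speed $\frac{h-1}{h+1}$. Mode 0 runs on the even-$k$ intervals, where \eqref{eq:seqdef} needs it at $\frac{h}{h+1}n$. At the wrong speed its offset changes by one position per $(h+1)$-period, so it sits at the required position at most once per interval, wherever it starts. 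The speed changes are also reversed. Mode 1's ratio $\frac{h^3+h^2-h+1}{h^3+2h^2-1}$ carries ratio $\frac{h-1}{h+1}$ at $t_k$ to $\frac{h}{h+1}$ at $s_{k+1}$, but after an even-$k$ interval the ratio must drop. For example, Mode 1 lands head $h-1$ near $\frac{h}{h+1}s_1$, while $k=1$ needs $\frac{h-1}{h+1}s_1$. Mode 2 then keeps it about $h(h+1)$ positions ahead throughout $(s_1,t_1)$. Your fallback does not rescue this. A persistent offset of even one position makes essentially every parity bet in the interval use the wrong operand, so a bounded position error gives linearly many bad positions, not finitely many. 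You need exact alignment at all but $O(1)$ multiples of $h+1$ per interval. The per-interval count the paper asserts therefore does not hold for $G$ as written. This is a defect in the paper's specification of $G$, which its proof never checks, not in your counting.

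The repair is to swap roles:
\begin{itemize}
\item run Modes $-1$ and $0$ at speed $\frac{h}{h+1}$, moving in the first $h$ of every $h+1$ steps;
\item give Mode 1 the slow-down ratio $\frac{h^3-2h}{h^3+2h^2-1}$;
\item run Mode 2 at speed $\frac{h-1}{h+1}$;
\item give Mode 3 the speed-up ratio.
\end{itemize}
Equivalently, flip the parity in \eqref{eq:seqdef}, as the displayed formula in the proof of Lemma~\ref{lem:olb-comp} already does, and give Mode 2 speed $\frac{h}{h+1}$.

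With this change your divisibility intuition is exactly right. Each gap consists of $s_{k+1}-t_k=(h+1)^{2k}(h^3+2h^2-1)$ steps, i.e., $(h+1)^{2k}$ whole periods. Its net displacement is $(h+1)^{2k}(h^3-2h)=\frac{h-1}{h+1}s_{k+1}-\frac{h}{h+1}t_k$, resp.\ $(h+1)^{2k}(h^3+h^2-h+1)=\frac{h}{h+1}s_{k+1}-\frac{h-1}{h+1}t_k$. Each interval consists of $(h+1)^{2k}$ steps, which are whole $(h+1)$-periods.

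For the one-step boundary issue, phase the $(h+1)$-periodic patterns by the leading head's position modulo $h+1$. Also let the movement on the step that reads a \$ already follow the new mode; this is allowed, since $\delta$ sees that symbol. Induction from $\frac{h}{h+1}t_0=h$ then gives exact positions at every $s_k$ and $t_k$. The last step of each $(h+1)$-period is a non-move for every trailing head. Hence the symbols that determine $q_{(h+1)m}$ are read exactly at $im$ and $(h-(k\bmod 2))m$.
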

    \begin{proof}
        Observe that $\rho(n)$ achieves its local maxima at $t_k$ and local minima at $s_k$, and that the number of correct full bets placed during each $(s_k,t_k)$ interval is $\frac{t_k-s_k}{h+1}\pm O(1)$. Therefore,
        \begin{align*}
            \limsup_{n\to\infty}\rho(n)&=\lim_{k\to\infty}\frac{\sum_{i=0}^k(t_i-s_i)}{(h+1)t_k}\\
            &=\lim_{k\to\infty}\frac{\sum_{i=0}^k(h+1)^{2i}}{(h+1)^{2k+2}}\\
            &=\lim_{k\to\infty}\frac{(h+1)^{2k+2}-1}{((h+1)^2-1)(h+1)^{2k+2}}\\
            &=\frac{1}{(h+2)h},
        \end{align*}
        and
        \begin{align*}
            \liminf_{n\to\infty}\rho(n)&=\lim_{k\to\infty}\frac{\sum_{i=0}^k(t_i-s_i)}{(h+1)s_{k+1}}\\
            &=\lim_{k\to\infty}\frac{\sum_{i=0}^k(h+1)^{2i}}{(h+1)^{2k+3}h}\\
            &=\lim_{k\to\infty}\frac{(h+1)^{2k+2}-1}{((h+1)^2-1)(h+1)^{2k+3}h}\\
            &=\frac{1}{(h+2)(h+1)h^2}.
        \end{align*}
    \end{proof}

    To prove Theorem~\ref{thm:aub}, it suffices to show, for all $\varepsilon>0$, that the $(L/\log(2^L+1)(1-\rho_1)+\varepsilon)$-gale of $G$ succeeds on $X$ and the $(L/\log(2^L+1)(1-\rho_2)+\varepsilon)$-gale of $G$ succeeds strongly on $X$.

    To that end, define
    \[C(n) = \left(\prod_{i=0}^{n-1}\frac{d_G(X[0..i])}{d_G(X[0..i-1])}\right)^{\frac{1}{n}}.\]
    This value is the geometric mean of the rate of growth per step for the martingale of $G$ as it reads $X[0..n-1]$. Let $M_1 = \limsup_{n\to\infty} C(n)$ and $M_2 = \liminf_{n \to \infty} C(n)$. Then
    \[M_1=\left((2^{L}+1)(1 - \varepsilon)\right)^{\rho_1}\cdot\left(\frac{2^L + 1}{2^L}(1 - \varepsilon)\right)^{1-\rho_1}\]
    and
    \[M_2=\left((2^{L}+1)(1 - \varepsilon)\right)^{\rho_2}\cdot\left(\frac{2^L + 1}{2^L}(1 - \varepsilon)\right)^{1-\rho_2},\]
    where $\rho_1$ and $\rho_2$ are as defined in Lemma~\ref{lem:rho}. Note that we can ignore the contribution of indices where the symbol is $\$$, since this symbol appears with asymptotic density 0.

    Now, consider the $s$-gale induced by the betting behavior of $G$. The geometric mean of its rate of growth per step is
    \[C(n) \cdot (2^{L}+1)^{s-1}.\]
    The $s$-gale thus succeeds whenever $M_1 \cdot (2^{L}+1)^{s-1} > 1$ and succeeds strongly whenever $M_2 \cdot (2^{L}+1)^{s-1} > 1$. Letting $L'=\log(2^L+1)$, we can rewrite these conditions as success whenever
    \begin{align*}
        s &>1- \frac{\log(M_1)}{L'}\\
        &=\frac{L}{L'}(1-\rho_1)-\frac{\log(1 - \varepsilon)}{L'}
    \end{align*}
    and strong success whenever
    \begin{align*}
        s & > 1-\frac{\log(M_2)}{L'} \\
        & = \frac{L}{L'}(1-\rho_2)-\frac{\log(1 - \varepsilon)}{L'}.
    \end{align*}
            
    Since these hold for all $\varepsilon > 0$, it follows that
    \[\adimFS{(h)}(\Phi_h(S))\leq  \frac{L}{L'}(1 - \rho_1)\]
    and
    \[\aDimFS{(h)}(\Phi_h(S))\leq  \frac{L}{L'}(1-\rho_2).\]
    Using Lemma~\ref{lem:rho}, we get that
    \[\adimFS{(h)}(\Phi_h(S))\leq  \frac{L}{L'}\left(1 - \frac{1}{h(h+2)}\right)\]
    and
    \[\aDimFS{(h)}(\Phi_h(S))\leq  \frac{L}{L'}\left(1 - \frac{1}{h^2(h+1)(h+2)}\right),\]
    as desired.
    
\section{Lower Bound for Oblivious Multi-Head Finite-State Predimensions}\label{sec:lb}

    In this section we complete our separation between adaptive and oblivious multi-head finite-state gamblers by showing that for all $h\geq 2$, no oblivious $h$-FSG can match the performance of the adaptive $h$-FSG defined in Section~\ref{sec:ub} on the same family of sequences. Formally, we use Kolmogorov complexity arguments to establish the following lower bounds on the oblivious $h$-head finite-state predimension and strong predimension of these sequences.
    \begin{theorem}\label{thm:olb}
        For every $h\geq 2$ and every Martin-L\"of random sequence $R\in(\{0,1\}^L)^\omega$,
        \[\odimFS{(h)}(\Phi_h(R))\geq \frac{L}{\log(2^L+1)}\left(1-\frac{(h+1)^2}{(h+1)^4-1}\right)\]
        and 
        \[\oDimFS{(h)}(\Phi_h(R))\geq \frac{L}{\log(2^L+1)}\left(1-\frac{1}{(h+1)h\cdot ((h+1)^4-1)}\right).\]
    \end{theorem}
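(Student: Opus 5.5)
The plan is to show that an oblivious gambler can fully exploit the parity structure in at most one of the two parities of $k$. Between consecutive \$ symbols its winnings are then capped at the rates forced by Kolmogorov complexity, and summing these caps gives the stated bounds. Fix an oblivious $h$-FSG $G$ over $\Gamma=\{0,1\}^L\cup\{\$\}$ and let $X=\Phi_h(R)$. By Observation~\ref{obs:speed}, each trailing head $i$ has a speed $\eta_i\in(0,1)$ with $|\pi_i(n)-\eta_i n|\le |T|$.

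\textbf{Step 1: where the derived symbols look.} For $n\in(s_k,t_k)$ with $(h+1)\mid n$, the referenced positions $\frac{j}{h+1}n$, for $j\in\{1,\dots,h\}$, lie in $(s_k/(h+1),\,hs_k/(h+1)) \subseteq (t_{k-1},s_k)$. This holds because $s_k/(h+1)=(h+1)^{2k-1}h>t_{k-1}$ and $ht_k/(h+1)=s_k$. So every derived symbol is a XOR of symbols from the purely random stretch of $R$ just before $s_k$, and no derived symbol depends on another derived symbol.

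Call $j$ \emph{covered} if $\eta_i=\frac{j}{h+1}$ for some trailing head $i$. Fully exploiting interval $k$ requires all of $1,\dots,h-2$ and $h-(k\bmod 2)$ to be covered. There are only $h-1$ trailing heads, so $h-1$ and $h$ cannot both be covered together with $1,\dots,h-2$. Hence for at least one parity class of $k$, every interval $(s_k,t_k)$ in that class has some needed index $j^*$ that is not covered. By symmetry the best case for $G$ is that it covers everything needed for one parity, say even $k$.

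\textbf{Step 2: complexity bounds on short blocks, then Lemma~\ref{lem:dtok}.} I will cut $[0,\infty)$ at the points $s_k,t_k$ and further cut each $(s_k,t_k)$ into blocks $[m,n]$ of length $\gamma m$, for a small rational $\gamma$. On a block, trailing head $i$ sees roughly $[\eta_i m,\eta_i n]$. The needed region for $j^*$ is $[\frac{j^*}{h+1}m,\frac{j^*}{h+1}n]$, and it meets $[\eta_i m,\eta_i n]$ only if $|\eta_i-\frac{j^*}{h+1}|=O(\gamma)$. Since $\eta_i\ne\frac{j^*}{h+1}$, this fails for all $\gamma$ below a threshold that depends only on $G$.

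Take $U$ as in Lemma~\ref{lem:dtok}. On a block of an uncovered interval, the symbols at the $j^*$-positions are Martin-L\"of random symbols of $R$ that are disjoint from $U$ and from the other symbols of the block. Each derived symbol therefore has full complexity $L$ conditional on $X[U]$. Together with the non-derived symbols, which also lie outside $U$, symmetry of information and randomness of $R$ give
\[K(X[[m,n]]\mid X[U])\ge (n-m)L-o(n-m).\]
On a block of a covered interval, dropping the derived symbols gives $K\ge \frac{h}{h+1}(n-m)L-o(n-m)$. On random stretches $[t_k,s_{k+1}]$, again split into blocks, we get $K\ge (n-m)L-o(n-m)$.

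Since $\log|\Gamma|=\log(2^L+1)$, Lemma~\ref{lem:dtok} then bounds the multiplicative capital gain on each block by $|\Gamma|^{\alpha(n-m)}$. In uncovered intervals and random stretches the exponent $\alpha$ is essentially $1-L/\log(2^L+1)$. In covered intervals it is essentially $1-\frac{h}{h+1}\cdot L/\log(2^L+1)$. The single \$ positions and the block boundaries contribute $O(\log n)$ and $O(\gamma)$ slack respectively.

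\textbf{Step 3: summing and computing $s$.} Multiplying the block bounds shows that $\log_{|\Gamma|} d_G(X[0..N-1])$ is at most $N(1-L/\log(2^L+1))$ plus $\frac{L}{\log(2^L+1)}\cdot\frac{1}{h+1}$ times the total length of covered intervals below $N$, plus $O(\gamma N)+o(N)$. Under the worst case of Step 1, the covered intervals are the even-$k$ ones.

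The ratio of covered length to $N$ peaks at $N=t_k$, with $k$ even:
\[\frac{\sum_{i\le k,\,i\text{ even}}(t_i-s_i)}{(h+1)t_k}\to\frac{(h+1)^2}{(h+1)^4-1}.\]
It bottoms out at $N=s_{k+2}$:
\[\frac{\sum_{i\le k,\,i\text{ even}}(t_i-s_i)}{(h+1)s_{k+2}}\to\frac{1}{(h+1)h\,((h+1)^4-1)}.\]
So the $s$-gale $d_G^{(s)}$ cannot succeed when $s$ is below the first bound, and cannot succeed strongly when $s$ is below the second bound. Letting $\gamma\to0$ gives Theorem~\ref{thm:olb}.

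The main obstacle is Step 2. The block-length argument must hold uniformly: heads whose speeds are close to, but not equal to, $\frac{j^*}{h+1}$ must be handled, which is why $\gamma$ must be fixed after $G$. The step also requires a careful symmetry-of-information argument that conditioning on $X[U]$, which includes pieces of the random region, does not reduce the complexity of the unseen random symbols that mask the derived symbols.
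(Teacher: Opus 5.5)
Your proposal takes essentially the same route as the paper. Your parity argument is the paper's Case~1/Case~2 split on whether $\frac{h}{h+1}\in H$. Your Step~2 bounds are Lemmas~\ref{lem:olb-comp} and~\ref{lem:highcompweak}, which the paper feeds into Lemma~\ref{lem:dtok} over geometrically growing blocks aligned with the intervals $[s_k,t_k]$. Your densities agree with the paper's $\delta_1,\delta_2$; both parities of $k$ give the same limits.

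The obstacle you flag is handled in the paper by conditioning on $X[W]$ with $W=[0,m]\setminus V$. This works because each position in $(t_{k-1},s_k)$ is referenced by at most one derived symbol: any ratio $i'/i>1$ with $i,i'\leq h$ is at least $\frac{h}{h-1}>\frac{h+1}{h}=t_k/s_k$. Consequently nothing in $X[0..m-1]$ other than the raw symbols at $V$ depends on $R[V]$.
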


    To prove this theorem, let $h\geq 2$, let $R\in(\{0,1\}^L)^\omega$ be Martin-L\"of random, and let $X=\Phi_h(R)$. Let $\Gamma=\{0,1\}^L\cup\{\$\}$, consider any oblivious $h$-FSG $G$ over the alphabet $\Gamma$, and let $H=\{\eta_1,\ldots,\eta_h\}$ be the set of head speeds, where $0<\eta_1<\ldots<\eta_h=1$. For each $i\in\{1,\ldots,h\}$ and $m,n\in\N$, let $a_i(m)=\lfloor\eta_im\rfloor-|T|$ and $b_i(n)=\lceil\eta_in\rceil+|T|$, and define the set
    \[U(m,n)=\bigcup_{i=1}^{h-1}[a_i(m),b_i(n)].\]
    By Observation~\ref{obs:speed}, this $U(m,n)$ is a superset of the positions of the trailing heads while the leading head reads $X[m..n]$, suitable for use in Lemma~\ref{lem:dtok}. In order to apply that lemma, we now establish conditional Kolmogorov complexity bounds.

    \begin{lemma}\label{lem:olb-comp}
        If $j\in\{1,\ldots,h-1\}$ satisfies $\frac{j}{h+1}\not\in H$, then there is some $\gamma\in(0,1)$ satisfying the following property for all sufficiently large $k$: For all integers $n\in (s_k,s_{k+1}]$ and $m\in[\gamma n,n]$, we have $\max U(m,n)<m$ and, if any of $j\leq h-2$, $j=h-(k\bmod 2)$, or $m\geq t_k$ hold, then
        \[K(X[[m,n]]\mid X[U(m,n)])\geq (n-m)L-O(\log n).\]
    \end{lemma}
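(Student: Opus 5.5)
The idea is to reduce the conditional complexity of $X[[m,n]]$ to the complexity of a set of \emph{free} symbols of the Martin-L\"of random sequence $R$. Call a position $p$ \emph{free} if $X[p]=R[p]$, i.e., $p$ is neither a \$ position nor a parity position defined by~\eqref{eq:seqdef}. The parity positions in $[m,n]$ are not free, but each one determines a free position $\frac{j}{h+1}p$ that the trailing heads never see. I will charge each parity position to that earlier position.

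\textbf{Choice of $\gamma$.} Since $\frac{j}{h+1}\notin H$ and $H$ is finite, I pick $\gamma<1$ so close to $1$ that the following hold.
\begin{itemize}
\item $\gamma>\max\{\eta_{h-1},\frac{h}{h+1}\}$. Then $\max U(m,n)\le \eta_{h-1}n+|T|+1<\gamma n\le m$ for large $n$.
\item For every $i\le h-1$, the intervals $[\eta_i\gamma n-|T|-1,\ \eta_in+|T|+1]$ and $[\frac{j}{h+1}\gamma n,\ \frac{j}{h+1}n]$ are disjoint for all large $n$. This is possible because the two scaled intervals shrink toward the distinct points $\eta_i$ and $\frac{j}{h+1}$.
\end{itemize}
Let $P$ be the set of parity positions in $[m,n]$, and set $W=\{\frac{j}{h+1}p : p\in P\}$. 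By the choice of $\gamma$, the set $W$ is disjoint from $U=U(m,n)$ and from $[m,n]$, and $|W|=|P|$.

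\textbf{$W$ consists of free positions.} If $m\ge t_k$, then $P$ is empty and nothing needs checking. Otherwise, every $p\in P$ lies in $(s_k,t_k)$, and the hypothesis ($j\le h-2$ or $j=h-(k\bmod 2)$) guarantees that $X[\frac{j}{h+1}p]$ actually occurs as a term in~\eqref{eq:seqdef} for $X[p]$. A short calculation shows that all the positions $\frac{i}{h+1}p$ with $1\le i\le h$ lie in the free block:
\[
\tfrac{i}{h+1}t_k\le h(h+1)^{2k}=s_k,\qquad \tfrac{i}{h+1}s_k\ge h(h+1)^{2k-1}>t_{k-1}.
\]
Hence all of them lie in $(t_{k-1},s_k)$, which contains no parity positions. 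The only exception is the endpoint $s_k$ itself, which can be hit only $O(1)$ times. So $W$ consists of free positions, up to $O(1)$ exceptions. Let $E$ be the set of the other referenced positions $\frac{i}{h+1}p$ with $i\ne j$. These are also free.

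\textbf{Reduction to complexity of $R$.} Let $A$ be the set of free positions in $[m,n]$. From $X[[m,n]]$, $X[U]$, and $R[E\setminus U]$ (with $m,n,k$ and the constants, which cost $O(\log n)$ bits), one can compute $R[A\cup W]$. Indeed, each $R[\frac{j}{h+1}p]$ equals $X[p]$ XOR-ed with the other referenced symbols, all of which are known. Therefore
\[
K(X[[m,n]]\mid X[U])\ \ge\ K\bigl(R[A\cup W]\ \bigm|\ R[U'\cup E']\bigr)-O(\log n),
\]
where $U'$ and $E'$ are the free parts of $U$ and $E\setminus U$. These sets are disjoint from $A\cup W$ and are computable from $(m,n)$. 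Also $|A|+|W|=|A|+|P|=n-m+1-O(1)$.

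\textbf{The randomness fact.} It remains to show that for ML-random $R$ and disjoint sets $C,D\subseteq[0,n]$ computable from $O(\log n)$ bits,
\[
K(R[C]\mid R[D])\ge |C|L-O(\log n).
\]
For this, note that $R[0..n]$ is computable from $R[C\cup D]$ together with the raw symbols on the complement, which cost $(n+1-|C\cup D|)L$ bits. This gives $K(R[C\cup D])\ge |C\cup D|L-O(\log n)$, using ML randomness. Symmetry of information, together with $K(R[D])\le|D|L+O(\log n)$, then yields the bound.

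\textbf{Expected obstacle.} The main difficulty I anticipate is the bookkeeping in the middle step. I need to verify carefully that every referenced earlier position is free, with only $O(1)$ \$-related exceptions. In particular, this must hold uniformly in the case $n>t_k$ with $m<t_k$, where $[m,n]$ straddles $t_k$. I also need to confirm that $W$ is genuinely disjoint from the windows of $U$ at the $\pm|T|$ slack, not just asymptotically.
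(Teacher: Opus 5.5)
Your strategy matches the paper's. Your $W$ is the paper's $V$, and both arguments rest on these positions being free symbols of $R$ that the trailing heads never see (by the choice of $\gamma$) and that can be recovered from $X[m..n]$. The paper conditions on all of $X[[0,m]\setminus V]$ and counts via symmetry of information with $X[0..n]$, whereas you recover $R[A\cup W]$ and use a disjoint-blocks randomness fact; that difference is only bookkeeping.

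There is one step that fails as written. You replace the condition $X[U]$ by $R[U'\cup E']$, with $U'$ the free part of $U$, but $X[U]$ is not computable from the free symbols in $U$. The windows of $U$ can contain parity positions. Some come from earlier blocks $(s_{k'},t_{k'})$, and when $\eta_{h-1}>\frac{h}{h+1}$ some come from the current block $(s_k,m)$. Their values are XORs of $R$ at positions outside $U$. So your inequality is unjustified. This is exactly the ``subtler leakage'' the lemma must rule out: your choice of $\gamma$ prevents a trailing head from \emph{reading} a position of $W$, but not from reading a parity symbol that \emph{depends} on one.

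The repair is as follows. Let $U'$ consist of the free positions in $U$ together with every position referenced via~\eqref{eq:seqdef} by a parity position in $U$. Referenced positions are always free, so $X[U]$ is computable from $R[U']$ plus the known \$ positions. You then need $U'\cap(A\cup W)=\emptyset$.
\begin{itemize}
\item References of $p'\in U$ are smaller than $p'<m$, so they miss $A$.
\item References of a parity position in $(s_{k'},t_{k'})$ lie in the free block $(t_{k'-1},s_{k'})$. Since $W\subseteq(t_{k-1},s_k)$, a reference can hit $W$ only if $p'\in(s_k,t_k)$.
\item In that case $\frac{i}{h+1}p'=\frac{j}{h+1}p$ with $p'<m\le p$ forces $i>j$. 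This gives $\frac{p'}{p}=\frac{j}{i}\le\frac{h-1}{h}<\frac{h}{h+1}=\frac{s_k}{t_k}<\frac{p'}{p}$, a contradiction.
\end{itemize}
Put differently, each free position is referenced by at most one parity position, and for positions in $W$ that parity position lies in $[m,n]$, outside $U$. The paper states this informally: each symbol is ``copied at most once'' and all copies lie in $X[m..n]$. Its formal proof uses the same fact implicitly when it asserts that $R[[0,m]\setminus(A\cup V)]\mapsto w$ is computable.

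The same ratio argument gives $E\cap W=\emptyset$, which you also use without proof. With these checks added, the rest of your argument goes through. As a minor point, $\frac{i}{h+1}p<\frac{i}{h+1}t_k\le s_k$ strictly, so there are no endpoint exceptions to handle.
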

    
    To see why this lemma is true, let $\eta=\frac{j}{h+1}$ and assume $\eta\not\in H$. Intuitively, during the interval where the gambler bets on symbols in some string $X[m..n]$, it might be helpful to access some of the symbols in $X[\eta m..\eta n]$. This will be the case if the index range $[m,n]$ includes part of the sequence that uses parity calculations involving symbols with indices in the range $[\eta m,\eta n]$---that is, some index range $[s_k,t_k]$ where either $j\leq h-2$ or $j=h-(k\bmod 2)$.
    
    Each trailing head $i$ will only read symbols within the string $X[a_i(m)..b_i(n)]$, though. We will choose $m$ and $n$ such that, for each trailing head $i$, there is no overlap between $[a_i(m),b_i(n)]$ and $[\eta m,\eta n]$ or $[m,n]$, so that head $i$ does not access the pertinent information during this interval. We will further argue that beyond this non-overlap condition, each $X[a_i(m)..b_i(n)]$ also does not tell the gambler anything about $X[\eta m..\eta n]$ in any subtler ways. Informally, this is because the structure of $X$ dictates that $X[\eta m..\eta n]$ consists entirely of random symbols directly from $R$, each of which is copied at most once in the sequence, and all these copies appear in $X[m..n]$, with which $X[a_i(m)..b_i(n)]$ does not overlap.

    \begin{figure}
    \centering
    \resizebox{\linewidth}{!}{%
    \begin{tikzpicture}[>=Latex,every node/.style={draw,minimum size=10pt,circle,inner sep=0pt,font=\footnotesize}]
        \node (s) at (12,0) {\$};
        \node (t) at (14.4,0) {\$};
        \node[draw=gold] (m) at (13,0) {$m$};
        \node[draw=gold] (n) at (14,0) {$n$};
        \coordinate (c) at (13.5,0) {};
        \coordinate (m1) at (2.167,0) {};
        \coordinate (c1) at (2.25,0) {};
        \coordinate (n1) at (2.333,0) {};
        \coordinate (m2) at (4.333,0) {};
        \coordinate (c2) at (4.5,0) {};
        \coordinate (n2) at (4.667,0) {};
        \coordinate (m3) at (6.5,0) {};
        \coordinate (c3) at (6.75,0) {};
        \coordinate (n3) at (7,0) {};
        \coordinate (m4) at (8.667,0) {};
        \coordinate (c4) at (9,0) {};
        \coordinate (n4) at (9.333,0) {};
        \coordinate (m5) at (10.833,0) {};
        \coordinate (c5) at (11.25,0) {};
        \coordinate (n5) at (11.667,0) {};
        \draw[semithick] (0,0) to (m1);
        \draw[ultra thick] (m1) to (n1);
        \draw[semithick] (n1) to (m2);
        \draw[ultra thick] (m2) to (n2);
        \draw[semithick] (n2) to (m3);
        \draw[ultra thick] (m3) to (n3);
        \draw[semithick] (n3) to (m4);
        \draw[ultra thick] (m4) to (n4);
        \draw[semithick] (n4) to (m5);
        \draw[ultra thick] (m5) to (n5);
        \draw[semithick] (n5) to (s);
        \draw[semithick] (s) to (m);
        \draw[ultra thick,gold] (m) to (n);
        \draw[semithick] (n) to (t);
        \draw[thick,gold,->] (c) to[out=135,in=45] (c1);
        \draw[thick,gold,->] (c) to[out=135,in=45] (c2);
        \draw[thick,gold,->] (c) to[out=135,in=45] (c3);
        \draw[thick,gold,->,dashed] (c) to[out=135,in=45] (c4);
        \draw[thick,gold,->,dashed] (c) to[out=135,in=45] (c5);
    \end{tikzpicture}}
    \caption{Consider $\Phi_5(R)$. During an interval $[m,n]\subseteq(s_k,t_k)$, the formula~\eqref{eq:seqdef} applies to one out of every six symbols. The interval $[m,n]$ thereby partially depends on the intervals $\left[\frac{m}{6},\frac{n}{6}\right],\left[\frac{m}{3},\frac{n}{3}\right],\left[\frac{m}{2},\frac{n}{2}\right]$, and either $\left[\frac{2m}{3},\frac{2n}{3}\right]$ or $\left[\frac{5m}{6},\frac{5n}{6}\right]$, depending on the parity of $k$. When $n$ is large and $m$ is close to $n$, the gaps between these five intervals are large, so they cannot all be visited by four trailing heads while the leading head is in $[m,n]$; there must be some unvisited interval $\left[\frac{jm}{6},\frac{jn}{6}\right]$ during this period. If $j=4$, then the gambler might still have all relevant information when $k$ is odd, and similarly for $j=5$ and $k$ even. If $j\leq 3$, though, then the gambler is missing pertinent information about almost all symbols to which~\eqref{eq:seqdef} applies.}
\end{figure}

    \begin{proof}[Proof of Lemma~\ref{lem:olb-comp}]
        More formally, using the convention $\max\emptyset=0$, define $\eta_<=\max(H\cap [0,\eta))$ and $\eta_>=\min(H\cap(\eta,1])$, the head speeds immediately below and above $\eta$, respectively. Let
        \begin{equation*}\label{eq:gamma}
            \gamma\in\left(\sqrt{\max\{\eta_{h-1},\eta_</\eta,\eta/\eta_>\}},1\right),
        \end{equation*}
        noting that this interval is nonempty. Then for all sufficiently large $n$, all $m\in[\gamma n,n]$, and all trailing heads $i$, we have $\eta_i n+|T|< \gamma n< m$ and either $\eta_i n+|T|<\eta m$ or $\eta_i m-|T|>\eta n$. By Observation~\ref{obs:speed}, we also have $a_i(m)\geq\eta_im-|T|$ and $b_i(n)\leq\eta_in+|T|$. Thus,
        \begin{equation}\label{eq:disjoint}
            [m,n]\cap[a_i,b_i]=\emptyset\text{ and }[\eta m,\eta n]\cap [a_i,b_i]=\emptyset
        \end{equation}
        hold for all trailing heads $i$.

        Hence, let $k$ be sufficiently large, let $n\in (s_k,s_{k+1}]$ and $m\in [\gamma n,n]$, and assume at least one of the conditions $j\leq h-2$, $j=h-(k\bmod 2)$, or $m\geq t_k$ holds. Define the sets
        \[V=\left\{\frac{\ell j}{h+1}\;\middle|\; \ell\in(s_k,t_k)\cap[m,n]\text{ and }\ell\bmod (h+1)=0\right\}\]
        and
        \[W=[0,m]\setminus V.\]
        Note that if $[m,n]$ is a subset of $[t_k,s_{k+1}]$---an interval where no parity calculations occur---then $W=[0,m]$.

        Define the strings $u=X[U]$, $u_h=X[[m,n]]$, $w=X[W]$, and $y=X[0..n]$. It follows from~\eqref{eq:disjoint} that $U\subseteq W$, so the mapping $w\mapsto u$ is computable. Therefore,
        \begin{align*}
            K(u_h\mid u)&\geq K(u_h\mid w)-O(1)\\
            &\geq K(u_h,w)-K(w)-O(1)\tag{symmetry of information}.
        \end{align*}
        Furthermore, the mapping $(u_h,w)\mapsto y$ is computable. To see this, observe that for each index $\ell\in W$, we have $X[\ell]=W[\ell]$. For each other index $\ell\in[0..n]$, we have $\ell\in V$. In this case~\eqref{eq:seqdef} gives us
        \begin{align*}
            X[\ell]&=\left(\bigoplus_{\substack{1\leq i\leq h+1\\i\neq j}}X\left[\frac{i}{j}\ell\right]\right)\oplus X\left[\frac{h-1+(k\bmod 2)}{j}\ell\right]\\
            &=\left(\bigoplus_{\substack{1\leq i\leq h\\i\neq j}}W\left[\frac{i}{j}\ell\right]\right) \oplus X\left[\frac{h+1}{j}\ell\right]\oplus W\left[\frac{h-1+(k\bmod 2)}{j}\ell\right],
        \end{align*}
        and since $\frac{h+1}{j}\ell\in[m,n]$, the symbol $X\left[\frac{h+1}{j}\ell\right]$ is given by $u_h$. Therefore,
        \[K(u_h,w)\geq K(y)-O(1).\]
        Putting this together, we have
        \begin{equation}\label{eq:uh_given_u}
            K(u_h\mid u)\geq K(y)-K(w)-O(1).
        \end{equation}

        We now bound $K(y)$. Let
        \[A=[0,n]\cap\bigcup_{i\in\N}\big(\{s_i,t_i\}\cup\{\ell\in(s_i,t_i)\mid \ell\bmod(h+1)=0\}\big);\]        
        this is the set of all indices of symbols in $R[0..n]$ that were overwritten by the function $\Phi_h$ to change $R[0..n]$ to $y$. Then the mapping $(y,R[A])\mapsto R[0..n]$ is computable, so
        \begin{align*}
            K(R[0..n])&\leq K(y,R[A])+O(1)\\
            &\leq K(y)+K(R[A])+O(1),
        \end{align*}
        which we rearrange to
        \begin{equation}\label{eq:yRR}
            K(y)\geq K(R[0..n])-K(R[A])-O(1).
        \end{equation}
        As $R$ is Martin-L\"of random over $\{0,1\}^L$, we have
        \[K(R[0..n])\geq nL-O(1),\]
        and since $R[A]$ is computable from a string of length $|A|$ over the same alphabet,
        \begin{align*}
            K(R[A])&\leq |A|L+O(\log|A|)\\
            &\leq |A|L+O(\log n).
        \end{align*}
        Combining these bounds with~\eqref{eq:yRR} gives us
        \begin{equation}\label{eq:ylb}
            K(y)\geq (n-|A|)L-O(\log n).
        \end{equation}

        Note that $A\cap V=\emptyset$ because whenever $\ell\in(s_k,t_k)$, we have $\frac{\ell j}{h+1}\in (t_{k-1},s_k)$. Since $W=[0,m]\setminus V$ and $W[A]=X[A]$ can be calculated by directly applying~\eqref{eq:seqdef}, the mapping $R[[0,m]\setminus (A\cup V)]\mapsto w$ is computable. Thus,
        \begin{equation}\label{eq:wub}
            K(w)\leq K(R[[0,m]\setminus (A\cup V)])+O(1).
        \end{equation}
        Furthermore, the string $R[[0,m]\setminus (A\cup V)]$ is computable from a string of the non-placeholder symbols present, which is a string of length $|[0,m]\setminus(A\cup V)|$ over the alphabet $\{0,1\}^L$, plus $O(1)$ additional information for $j$, from which the locations of the placeholder bits can be inferred.  Therefore,
        \[K(w)\leq |[0,m]\setminus(A\cup V)|L+O(\log n).\]
        
        We now bound $|[0,m]\setminus(A\cup V)|$. Since $A\cap V=\emptyset$ and $V\subseteq[0,m]$, we ahve
        \begin{align*}
            |[0,m]\setminus(A\cup V)|&=|[0,m]|-|A|+|A\cap(m,n]|-|V|\\
            &\leq m-|A|+\frac{|(s_k,t_k)\cap(m,n]|}{h+1}-\frac{|(s_k,t_k)\cap[m,n]|}{h+1}+O(\log n)\\
            &=m-|A|+O(\log n),
        \end{align*}
        by the definitions of $A$ and $V$. Recalling that $L$ is a constant and combining this with~\eqref{eq:uh_given_u},~\eqref{eq:ylb}, and~\eqref{eq:wub} yields
        \[K(u_h\mid u)\geq (n-m)L-O(\log n).\]
        
    \end{proof}
        
    Even if the trailing heads are optimally positioned for some subsequence, at least an $\frac{h}{h+1}$ fraction of the symbols in the sequence are ``fresh'' random symbols from $R$, which the trailing heads' observations say nothing about. The following lemma formalizes this intuition.
    \begin{lemma}\label{lem:highcompweak}
       Let $\gamma$ be as in Lemma~\ref{lem:olb-comp}. If $k$ is sufficiently large, then for all integers $n\in(s_k,s_{k+1}]$ and $m\in[\gamma n,n]$,
        \[K(X[[m,n]]\mid X[U(m,n)])\geq \frac{h}{h+1}(n-m)L-O(\log n).\]
    \end{lemma}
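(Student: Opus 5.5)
We will mirror the proof of Lemma~\ref{lem:olb-comp}, but instead of putting the positions $V$ into the conditioning data we simply condition on all of $X[0..m-1]$ (or rather on $X[[0,m-1]]$). The idea is that even full knowledge of the past cannot determine the non-parity positions in $[m,n]$, and at most a $\frac{1}{h+1}$ fraction (plus $O(1)$, plus the $O(\log n)$ dollar positions) of $[m,n]$ are parity positions.

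The first step is to note that $\max U(m,n)<m$ for sufficiently large $k$. This was established in the proof of Lemma~\ref{lem:olb-comp}: with $\gamma>\sqrt{\eta_{h-1}}$ we get $b_i(n)\le\eta_i n+|T|<\gamma n\le m$, and this part of the argument does not depend on $j$. Hence $U(m,n)\subseteq[0,m-1]$, and $X[U]$ is computable from $w=X[[0,m-1]]$. This gives
\[K(X[[m,n]]\mid X[U])\ge K(X[[m,n]]\mid w)-O(1)\ge K(y)-K(w)-O(1),\]
where $y=X[0..n]$. The second inequality uses symmetry of information together with the observation that $(X[[m,n]],w)\mapsto y$ is trivially computable; the $O(\log n)$ loss is absorbed.

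Next we bound $K(y)$ from below exactly as in~\eqref{eq:ylb}, namely $K(y)\ge (n-|A\cap[0,n]|)L-O(\log n)$, using the Martin-L\"of randomness of $R$. For the upper bound on $K(w)$, note that $w$ is computable from the non-overwritten symbols $R[[0,m-1]\setminus A]$, since every overwritten position there is either a \$ or a parity of earlier positions computed by~\eqref{eq:seqdef}. Given $m$, the locations of $A$ are known, so
\[K(w)\le(m-|A\cap[0,m-1]|)L+O(\log n).\]
Subtracting the two bounds yields
\[K(X[[m,n]]\mid X[U])\ge\bigl(n-m-|A\cap[m,n]|\bigr)L-O(\log n).\]

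Finally we count $A\cap[m,n]$. It consists of at most $O(1)$ dollar positions (in fact at most $2$ for $k$ large, since $[m,n]\subseteq(t_{k-1},s_{k+1}]$ once $\gamma n>t_{k-1}$) together with multiples of $h+1$ in $[m,n]$, of which there are at most $\frac{n-m}{h+1}+1$. This gives $|A\cap[m,n]|\le\frac{n-m}{h+1}+O(1)$, and hence the claimed bound $\frac{h}{h+1}(n-m)L-O(\log n)$.

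The only delicate point is the computability of $w$ from the unoverwritten symbols of $R$. Parity positions in $[0,m-1]$ refer only to smaller indices, which lie either in $[0,m-1]$ or are themselves computed recursively, so a left-to-right reconstruction works. The $O(\log n)$ terms needed to specify $m$ and $n$ are absorbed into the error term.
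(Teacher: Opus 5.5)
Your proof is correct and is essentially the paper's argument: since $\max U(m,n)<m$, you condition on the whole past, apply symmetry of information and the Martin-L\"of randomness of $R$, and count that only about a $\frac{1}{h+1}$ fraction of $[m,n]$ is not fresh. The difference is only bookkeeping. You bound $K(X[0..n])-K(X[0..m-1])$ using the overwritten set $A$ and~\eqref{eq:ylb}, whereas the paper projects $X[[m,n]]$ onto the fresh coordinates $R[C\cap[m,n]]$ and bounds their complexity given $R[[0,m]]$ and $R[B\cap[m,n]]$, where $B$ is the set of multiples of $h+1$ and $C$ its complement; your version even gives the slightly sharper bound $(n-m-|A\cap[m,n]|)L$.
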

    \begin{proof}
        Let $k$ be sufficiently large, let $n\in(s_k,s_{k+1}]$ and $m\in[\gamma n,n]$.          Note that by Lemma~\ref{lem:olb-comp}, we have $U(m,n) \cap [m,n] = \emptyset$.

        Define the sets
        \[B=\{\ell \in \{0,\ldots,n\}\; \mid\;\ell\bmod (h+1)=0\}\]
        and $C=\{0,\ldots,n\}\setminus B$, and the strings $u = X[U(m,n)]$, $u_h = X[[m,n]]$, $w=R[[0,m]]$, $x = X[[0,m]]$, $y=R[B\cap[m,n]]$, and $z=R[C\cap[m,n]]$.

        \begin{align*}
            K(u_h \mid u) &\geq K(z\mid u)-O(1)\tag{$u_h \mapsto z$ is computable}\\
            &\geq K(z \mid x)-O(1)\tag{$x\mapsto u$ is computable}\\
            &\geq K(z\mid w,y)-O(1)\tag{$(w,y)\mapsto x$ is computable}\\
            &\geq K(z,w,y)-K(w,y)-O(1)\tag{symmetry of information}\\
            &\geq K(R[0..n])-K(w,y)-O(1)\tag{$(z,w,y)\mapsto R[0..n]$ is computable}\\
            &\geq nL-K(w,y)-O(1)\tag{$R$ is Martin-L\"of random}
        \end{align*}

        Now, $w$ includes $m+1$ symbols from $R$, and $y$ includes $\frac{n-m}{h+1}\pm O(1)$ symbols from $R$; the remaining symbols in both strings are just placeholders. Since the positions of the placeholders follow a simple pattern, this means $(w,y)$ can be computed from a string of the $m+\frac{n-m}{h+1}\pm O(1)$ non-placeholder symbols involved. As these symbols come from the alphabet $\{0,1\}^L$, this implies
        \begin{align*}
            K(w,y)&\leq \left(m+\frac{n-m}{h+1}\right)L+O\left(\log\left(m+1+\frac{n-m}{h+1}\right)\right)\\
            &\leq \left(\frac{n+hm}{h+1}\right)L+O(\log n).
        \end{align*} 
        Combining the above inequalities, we have
        \[K(u_h\mid u)\geq \frac{h}{h+1}(n-m)L-O(\log n).\]   
    \end{proof}

    With these conditional Kolmogorov complexity bounds in hand, we are now ready to complete the proof of Theorem~\ref{thm:olb}. There are two cases to consider.

    \emph{Case 1:} Suppose $\frac{h}{h+1}\not\in H$. Define the function $\beta:\N\to[0,1]$ by
    \[\beta(n)=\frac{1}{n}\sum_{\text{odd }k}\left|[s_k..t_k]\cap [0..n-1]\right|.\]
    Informally, this is the density in $X[0..n-1]$ of intervals on which $G$ \emph{might} gamble more successfully, in case $\frac{h-1}{h+1}\in H$.
    
    Observe that $\beta(n)$ reaches its local minima at positions $s_{2k+1}$ and local maxima at positions $t_{2k+1}$. Thus,
    \begin{align}
        \delta_1&:=\limsup_{n\to\infty}\frac{\beta(n)}{h+1}\notag\\
        &=\lim_{k\to\infty}\frac{\sum_{i=0}^k (t_{2i+1}-s_{2i+1})}{t_{2k+1}(h+1)}\notag\\
        &=\lim_{k\to\infty}\frac{\sum_{i=0}^k (h+1)^{4i+2}}{(h+1)^{4k+4}}\notag\\
        &=\lim_{k\to\infty}\frac{\sum_{i=0}^k (h+1)^{4i}}{(h+1)^{4k+2}}\notag\\
        &=\lim_{k\to\infty}\frac{(h+1)^{4k+4}-1}{(h+1)^{4k+2}((h+1)^4-1)}\notag\\
        &=\lim_{k\to\infty}\frac{(h+1)^2-(h+1)^{-4k-1}}{(h+1)^4-1}\notag\\
        &=\frac{(h+1)^2}{(h+1)^4-1},\label{eq:limsupbeta}
    \end{align}
    and
    \begin{align}
        \delta_2&:=\liminf_{n\to\infty}\frac{\beta(n)}{h+1}\notag\\
        &=\lim_{k\to\infty}\frac{\sum_{i=0}^k (t_{2i+1}-s_{2i+1})}{s_{2k+3}(h+1)}\notag\\
        &=\lim_{k\to\infty}\frac{\sum_{i=0}^k (h+1)^{4i+2}}{(h+1)^{4k+7}h}\notag\\
        &=\lim_{k\to\infty}\frac{\sum_{i=0}^k (h+1)^{4i}}{(h+1)^{4k+5}h}\notag\\
        &=\lim_{k\to\infty}\frac{(h+1)^{4k+4}-1}{(h+1)^{4k+5}h\cdot ((h+1)^4-1)}\notag\\
        &=\lim_{k\to\infty}\frac{1-(h+1)^{-4k-4}}{h\cdot (h+1) ((h+1)^4-1)}\notag\\
        &=\frac{1}{h\cdot (h+1)((h+1)^4-1)}.\label{eq:liminfbeta}
    \end{align}
    
    Let $\varepsilon<1-\max\{L/\log|\Gamma|,\gamma\}$ be rational. We now use Lemmas~\ref{lem:dtok},~\ref{lem:olb-comp}, and~\ref{lem:highcompweak} to show that, for all sufficiently large $n$,
    \begin{equation}\label{eq:dGbd}
        \log d_G(X[0..n-1])\leq \left(\log|\Gamma|-\left(1-\frac{\beta(n)}{h+1}\right)L+(L+4)\varepsilon\right)n.
    \end{equation}
    
    For this, let $L'=\log|\Gamma|$. Define a sequence $(n_i)_{i\in\N}$ of positive integers such that $n_0$ is sufficiently large, and, for all $i\in\N$,
    \begin{enumerate}
        \item $n_i/n_{i+1}\in\left(\gamma,\frac{1}{1+\varepsilon}\right)$, and
        \item for all $k\in\N$, either $[n_i,n_{i+1}-1]\subseteq [s_k,t_k]$ or those two intervals are disjoint.
    \end{enumerate}

    For each $i\in\N$, let $z_i=X[U(n_i,n_{i+1}-1)]$. By Lemma~\ref{lem:olb-comp}, for all $i$ such that $[n_i,n_{i+1}-1]\cap [s_k,t_k]=\emptyset$ for all odd $k$, we have
    \begin{align}
        K(X[[n_i,n_{i+1}-1]]\mid z_i)
        &\geq (n_{i+1}-n_i)\left(L-\frac{\varepsilon}{2}\right)\notag\\
        &\geq\left(\frac{L}{L'}-\frac{\varepsilon}{2}\right)(n_{i+1}-n_i)\log|\Gamma|.\label{eq:olb1}
    \end{align}
    Similarly, by Lemma~\ref{lem:highcompweak}, for all $i$ we have
    \begin{equation}\label{eq:olb2}
        K(X[[n_i,n_{i+1}-1]]\mid z_i)\geq \left(\frac{hL}{(h+1)L'}-\frac{\varepsilon}{2}\right)(n_{i+1}-n_i)L'.
    \end{equation}
    These last two inequalities are each in a suitable form to apply Lemma~\ref{lem:dtok}, with parameters $\delta_1=\delta_2=\varepsilon/2$,
    \[\alpha_1=1-\frac{L}{L'}+\varepsilon,\]
    and
    \[\alpha_2=1-\frac{hL}{(h+1)L'}+\varepsilon,\]
    respectively, for the two applications. Note that our choice of $\varepsilon$ guarantees $\alpha_1,\alpha_2\in(0,1)\cap\Q$, so Lemma~\ref{lem:dtok} applies. Informally, we will use~\eqref{eq:olb1} to get the stronger bound $\alpha_1$ on the gambler's rate of success on most parts of the sequence, and we will use~\eqref{eq:olb2} to get the weaker bound $\alpha_2$ on the gambler's rate of success on the parts of the sequence where its trailing head placement might give it more of an advantage. Since the latter case occurs a $\beta(n)$ fraction of the time, the overall bound will depend on $\alpha_1$, $\alpha_2$, and $\beta(n)$.

    Fix $\ell\in\N$ and $n\in(n_\ell,n_{\ell+1}]$. Intuitively, it would be convenient to have $n=n_i$ for some $i$, or at least so close to some $n_i$ that the difference is negligible. To this end, if $n_\ell/n<\frac{1}{1+\varepsilon}$, then redefine $n_{\ell+1}=n$ and $\ell=\ell+1$; this maintains the two properties we wanted in our sequence of $n_i$, at least for $i<\ell$. In either case, we now have $n_\ell/n\in\left[\frac{1}{1+\varepsilon},1\right]$, so
    \begin{equation}\label{eq:nl}
        d_G(X[0..n-1])\leq d_G(X[0..n_\ell-1])2^{\varepsilon n L'}.
    \end{equation}
    
    We now turn to bounding $d_G(X[0..n_\ell-1])$. Let
    \begin{align*}
        I&=\{i\in\{0,\ldots,\ell-1\}:[n_i,n_{i+1}-1]\cap[s_k,t_k]=\emptyset\text{ for all odd }k\},\\
        J&=\{i\in\{0,\ldots,\ell-1\}:[n_i,n_{i+1}-1]\subseteq[s_k,t_k]\text{ for some odd }k\}.
    \end{align*}
    By our choices of $n_i$, these two sets form a partition of $\{0,\ldots,\ell-1\}$. 
    For each $i\in\N$, let $x_i=X[n_i..n_{i+1}-1]$ and $y_i=X[0..n_i-1]$. Then
    \begin{align*}
        d_G(X[0..n_\ell-1])&= d_G(y_0)\left(\prod_{i=0}^{\ell-1}\frac{d_G(y_ix_i)}{d_G(y_i)}\right)\\
        &=d_G(y_0)\left(\prod_{i\in I}\frac{d_G(y_ix_i)}{d_G(y_i)}\right)\left(\prod_{i\in J}\frac{d_G(y_ix_i)}{d_G(y_i)}\right),
    \end{align*}
    so
    \[\log d_G(X[0..n_\ell-1])=\sum_{i\in I}\log\frac{d_G(y_ix_i)}{d_G(y_i)}+\sum_{i\in J}\log\frac{d_G(y_ix_i)}{d_G(y_i)}+O(1).\]
    As~\eqref{eq:olb1} applies in $I$ and~\eqref{eq:olb2} applies in $J$, applying Lemma~\ref{lem:dtok} gives us
    \begin{align*}
        \log d_G(X[0..n_\ell-1])&\leq\sum_{i\in I}(n_{i+1}-n_i-1)\alpha_1L'+\sum_{i\in J}(n_{i+1}-n_i-1)\alpha_2L'+O(1)\\
        &\leq \sum_{i\in I}(n_{i+1}-n_i)\alpha_1L'+\sum_{i\in J}(n_{i+1}-n_i)\alpha_2L'+O(\log n).
    \end{align*}
    Since
    \[\frac{1}{n_\ell}\sum_{i\in J}(n_{i+1}-n_i)=\beta(n_\ell)\pm \frac{O(1)}{n_\ell},\]
    we can rewrite this as
    \begin{align*}
        \log d_G(X[0..n_\ell-1])&\leq (1-\beta(n_\ell))n_\ell\alpha_1L'+\beta(n_\ell)n_\ell\alpha_2 L' + O(\log n)\\
        &=\left(\alpha_1 L' + \beta(n_\ell)(\alpha_2-\alpha_1)L'\right)n_\ell+O(\log n)\\
        &=\left(L'-L+\frac{\beta(n_\ell)L}{h+1}+L'\varepsilon\right)n_\ell+O(\log n).
    \end{align*}
    Combining this with~\eqref{eq:nl} and noting that
    \[\frac{\beta(n_\ell)}{\beta(n)}\leq\frac{n}{n_\ell}\leq 1+\varepsilon,\]
    we have
    \begin{align*}
        \log d_G(X[0..n-1])&\leq \left(L'-L+\frac{\beta(n_\ell)L}{h+1}+(L'+1)\varepsilon\right)n+O(\log n)\\
        &\leq \left(L'-L+\frac{L}{h+1}(1+\varepsilon)\beta(n)+(L'+1)\varepsilon\right)n+O(\log n)\\
        &\leq \left(L'-L+\frac{L}{h+1}\beta(n)+(L'+2)\varepsilon\right)n+O(\log n)\\
        &\leq \left(L'-L+\frac{L}{h+1}\beta(n)+(L+3)\varepsilon\right)n+O(\log n)\\
        &\leq \left(\log|\Gamma|-\left(1-\frac{\beta(n)}{h+1}\right)L+(L+4)\varepsilon\right)n,
    \end{align*}
    so~\eqref{eq:dGbd} holds for all sufficiently large $n$.

    Therefore,
    \begin{align*}
        &\limsup_{n\to\infty}d_G^{((1-\delta_1)L/\log|\Gamma|-(L+5)\varepsilon)}(X[0..n-1])\\
        &=\limsup_{n\to\infty}|\Gamma|^{((1-\delta_1)L/\log|\Gamma|-(L+5)\varepsilon-1)n}d_G(X[0..n-1])\\
        &=\limsup_{n\to\infty}2^{(1-\delta_1)Ln}|\Gamma|^{-((L+5)\varepsilon+1)n}d_G(X[0..n-1])\\
        &\leq\limsup_{n\to\infty}2^{(1-\delta_1)Ln}|\Gamma|^{-((L+5)\varepsilon+1)n}|\Gamma|2^{-(1-\beta(n)/(h+1))Ln}2^{(L+4)\varepsilon n}\tag{by~\eqref{eq:dGbd}}\\
        &=\limsup_{n\to\infty} 2^{(1-\delta_1)Ln}|\Gamma|^{-((L+5)\varepsilon+1)n}|\Gamma|2^{-(1-\delta_1)Ln}2^{(L+4)\varepsilon n}\tag{by def. of $\delta_1$}\\
        &<\limsup_{n\to\infty}|\Gamma|^{-\varepsilon n}\tag{$|\Gamma|=2^L+1>2$}\\
        &=0,
    \end{align*}
    and similarly,
    \[\liminf_{n\to\infty}d_G^{((1-\delta_2)L/\log|\Gamma|-(L+5)\varepsilon)}(X[0..n-1])=0.\]
    Thus, the $((1-\delta_1)L/\log|\Gamma|-(L+5)\varepsilon)$-gale of $G$ does not succeed on $X$, and the $((1-\delta_2)L/\log|\Gamma|-(L+5)\varepsilon)$-gale of $G$ does not succeed strongly on $X$.

    \emph{Case 2:} If $\frac{h}{h+1}\in H$, then since $|H|=h$ and $h\in H$, there must be some $j\leq h-1$ such that $\frac{j}{h+1}\not\in H$. The rest of the argument is symmetric to Case 1, and we reach the same conclusion about the $((1-\delta_1)L/\log|\Gamma|-(L+5)\varepsilon)$-gale and $((1-\delta_2)L/\log|\Gamma|-(L+5)\varepsilon)$-gale of $G$.

    Every oblivious $h$-FSG $G$ falls into one of the two cases above. Letting $\varepsilon$ approach 0, this implies
    \begin{align*}
        \odimFS{(h)}(X)\geq \frac{(1-\delta_1)L}{\log|\Gamma|}\text{ and }
        \oDimFS{(h)}(X)\geq \frac{(1-\delta_2)L}{\log|\Gamma|},
    \end{align*}
    and applying~\eqref{eq:limsupbeta} and~\eqref{eq:liminfbeta} completes the proof of Theorem~\ref{thm:olb}.
    
    Combined with Theorem~\ref{thm:aub}, this establishes a strict separation between adaptive and oblivious $h$-head predimensions.

\section{A Hierarchy of Adaptive Predimensions}\label{sec:hier}

    In this section we prove our hierarchy theorem, describing, for each $h\geq 1$, sequences whose adaptive $(h+1)$-head predimensions are strictly less than their adaptive $h$-head predimensions. In fact, even the \emph{oblivious} $(h+1)$-head predimensions of these sequences are strictly less than their adaptive $h$-head predimensions.
    
    \begin{theorem}\label{thm:hierarchy}
        For each $h\geq 1$ there is a sequence $Y\in\{0,1\}^\omega$ such that
        \[\oDimFS{(h+1)}(Y)<\adimFS{(h)}(Y).\]
    \end{theorem}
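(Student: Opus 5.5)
We will follow the intro's description and reuse the sequence family from~\cite{mhfsd} for the oblivious hierarchy. That is, we take $Y$ to be the image of a Martin-L\"of random $R$ under a self-referential map analogous to $\Phi_h$. In the $(h+1)$-head case, a positive fraction of the positions (say every $(h+2)$-th position, or the positions in growing blocks) is overwritten by the parity of $h$ earlier symbols at relative positions $\frac{1}{h+2}n,\ldots,\frac{h}{h+2}n$. This structure is fixed rather than switching, and it can be encoded over $\{0,1\}$ by a block coding if needed.

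The upper bound needs no new work. An oblivious $(h+1)$-FSG whose $h$ trailing heads move at the constant speeds $\frac{i}{h+2}$ reads exactly the needed symbols and bets all-in (hedged by $\varepsilon$) on the parity positions. This gives an $s$-gale with strong success for every $s>1-\rho$, where $\rho$ is the density of parity positions. We invoke~\cite{mhfsd} for this or repeat the computation of Lemma~\ref{lem:rho}, obtaining $\oDimFS{(h+1)}(Y)\le 1-\rho_{\min}$, where $\rho_{\min}$ is the liminf density.

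For the lower bound on $\adimFS{(h)}(Y)$, fix any adaptive $h$-FSG $G$. It has only $h-1$ trailing heads, while each parity needs information from $h$ separated regions $[\frac{i}{h+2}m,\frac{i}{h+2}n]$. The key step is a pigeonhole argument over a window $[m,n]$ with $m\ge\gamma n$ and $\gamma$ close to 1. The difference from the oblivious case is that trailing heads no longer have fixed speeds, so Observation~\ref{obs:speed} is unavailable. Instead, we will use the fact that every head moves monotonically right at speed at most one. During the window the leading head advances $n-m\le(1-\gamma)n$ steps, so each trailing head sweeps an interval of length at most $(1-\gamma)n$. Such an interval meets at most one of the $h$ target regions (each of length about $\frac{i}{h+2}(1-\gamma)n$ and separated by gaps of order $n/(h+2)$), plus possibly a neighborhood of $[m,n]$ itself.

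Hence some region $j$ is unvisited. We take $U$ to be the union of the actual trailing head ranges before $m$, which is the form Lemma~\ref{lem:dtok} already permits for adaptive gamblers. We then replicate the Kolmogorov argument of Lemma~\ref{lem:olb-comp}. We define $V$ as the copies in region $j$ and $W=[0,m]\setminus V$, show $U\subseteq W$, and reconstruct $X[0..n]$ from $(u_h,w)$. Combining symmetry of information with Martin-L\"of randomness then gives $K(X[[m,n]]\mid X[U])\ge (n-m)-O(\log n)$.

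The main obstacle is that $U$ now depends on $R$ through $G$'s adaptive moves, so $X[U]$ is no longer computable from $w$ via a fixed index set. We will handle this by noting that $U$ is determined by $G$'s run on $X[0..n]$. It therefore suffices to describe $U$ by its $h-1$ endpoint pairs, at a cost of $O(\log n)$ bits, after which $X[U]$ is computable from $(w,\text{endpoints})$ and the $O(\log n)$ loss is absorbed.

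A secondary concern is the trailing head's range adjacent to $[m,n]$: an adaptive head may sit near the leading head. Lemma~\ref{lem:dtok} only requires positions $<m$, so this is harmless. With the complexity bound in hand, we apply Lemma~\ref{lem:dtok} on a geometric partition $n_i$, exactly as in Case 1 of Theorem~\ref{thm:olb}. Every window then gets rate $\alpha\approx\varepsilon$, so the gambler gains essentially nothing, and $\adimFS{(h)}(Y)\ge 1$ up to the density of \$-type markers, if any. Since $1-\rho_{\min}<1$, we obtain the strict inequality $\oDimFS{(h+1)}(Y)<\adimFS{(h)}(Y)$.
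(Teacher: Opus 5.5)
\textbf{There is a genuine gap: the sequence is never fixed, and the first option you name fails.} Your lower-bound template is essentially the paper's (Lemmas~\ref{lem:hier-disjoint}--\ref{lem:hier-dbound}): windows $[m,n]$ with $m\ge\gamma n$, pigeonholing the $h-1$ trailing sweeps (each of length at most $n-m+1$) against $h$ separated target regions, paying $O(\log n)$ to name the adaptive head ranges, then Lemma~\ref{lem:dtok} along a geometric partition. The problem is the sequence. Suppose every $(h+2)$-th position $n$ is overwritten by $\bigoplus_{i=1}^h X[\frac{i}{h+2}n]$. Then overwritten positions feed into one another and the parities cancel. For $h=2$, $X[8r]=X[2r]\oplus X[4r]=X[2r]\oplus X[r]\oplus X[2r]=X[r]$. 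So a $2$-head gambler with one trailing head of speed $1/8$ predicts every eighth symbol, and $\adimFS{(2)}(Y)\le 7/8$. For $h=4$ one similarly gets $X[12q]=X[q]\oplus X[3q]\oplus X[8q]$, which three trailing heads can exploit. So your claimed bound $\adimFS{(h)}(Y)\ge 1$ is false for this sequence.

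The step that breaks is ``replicate Lemma~\ref{lem:olb-comp}.'' That proof needs the referenced region $V$ to consist of fresh symbols of $R$, referenced only from inside $[m,n]$, so that $w$ is computable from the symbols of $R$ outside $A\cup V$. Here $w$ itself contains overwritten symbols that depend on $V$. For example, when $j=1$, $w$ contains $X[2q]=X[q/2]\oplus X[q]$ for every even $q\in V$, so $w$ reveals a constant fraction of $X[V]$. This is exactly why the paper uses $F_{h+1}$ with prime ratios $p_k/p_{h+1}$. By unique factorization, the leaf reached by always taking the $i$-th child lies on a unique path of the parity tree and cannot cancel. The sets $V_i$, their closures $\overline{V}_i$, and the depth cutoff $d$ then control the remaining recursion.

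\textbf{The block option works over a larger alphabet but does not give a binary $Y$.} Your other option is sparse growing blocks modeled on $\Phi$: references from $(s_k,t_k)$ land in the fresh stretch $(t_{k-1},s_k)$, and each fresh symbol is referenced at most once. This removes the recursion, and with it your pigeonhole plus the Lemma~\ref{lem:olb-comp} computation should go through. But the oblivious $(h+1)$-head gambler then needs \$ markers to locate the blocks, so $Y$ lives over $\{0,1\}^L\cup\{\$\}$, while the theorem asks for $Y\in\{0,1\}^\omega$.

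``Block coding if needed'' does not settle this. After encoding, heads advance one bit at a time, and the oblivious gambler must buffer codewords read at speeds $\frac{i}{h+2}$. Both bounds pick up a new normalization, so the upper and lower bounds must be re-derived for the encoded sequence. Alternatively, you could drop the markers and separately show that a lightly hedged bet on every $(h+2)$-th symbol still gives $\oDimFS{(h+1)}(Y)<1$. Also, the factor you lose over $\{0,1\}^L\cup\{\$\}$ is $L/\log(2^L+1)$, which comes from the alphabet size, not from the density of markers (that density is $0$).

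The paper's $F_{h+1}$ avoids all of this. Its structure is uniform, so no markers are needed and $Y$ is binary from the start; the prime ratios are what make a uniform structure safe.
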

    Since adaptive gamblers generalize oblivious gamblers, this theorem has the following immediate corollary.
    \begin{corollary}
        For each $h\geq 1$ there is a sequence $Y\in\{0,1\}^\omega$ such that
        \[\adimFS{(h+1)}(Y)<\adimFS{(h)}(Y)\text{ and }\aDimFS{(h+1)}(Y)<\aDimFS{(h)}(Y).\]
    \end{corollary}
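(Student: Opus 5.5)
The corollary follows from Theorem~\ref{thm:hierarchy} by a chain of elementary inequalities between the four predimension notions, with no new construction. I will take the sequence $Y$ given by Theorem~\ref{thm:hierarchy} for the fixed $h$. For that $Y$ we have $\oDimFS{(h+1)}(Y)<\adimFS{(h)}(Y)$, and I will sandwich both desired inequalities around this strict one.

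\textbf{Step 1: dimension is at most strong dimension.} For every $h'$ and every sequence, $\adimFS{(h')}\le\aDimFS{(h')}$ and $\odimFS{(h')}\le\oDimFS{(h')}$. This holds because strong success implies success: if $\liminf_{n} d(S[0..n-1])=\infty$, then also $\limsup_{n} d(S[0..n-1])=\infty$. So $S_{\strong}^\infty(d_G^{(s)})\subseteq S^\infty(d_G^{(s)})$. Hence the set of $s$ witnessing strong success is contained in the set witnessing success, and the infimum over the larger set is smaller.

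\textbf{Step 2: adaptive is at most oblivious.} Every oblivious $(h+1)$-FSG is in particular an adaptive $(h+1)$-FSG, as noted in Section~\ref{sec:dims}. Therefore the infimum defining $\aDimFS{(h+1)}(Y)$ ranges over a superset of gamblers compared with $\oDimFS{(h+1)}(Y)$, giving $\aDimFS{(h+1)}(Y)\le\oDimFS{(h+1)}(Y)$. The same argument gives $\adimFS{(h+1)}(Y)\le\odimFS{(h+1)}(Y)$.

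\textbf{Step 3: chain the inequalities.} Combining Steps 1 and 2 with Theorem~\ref{thm:hierarchy} gives
\[\adimFS{(h+1)}(Y)\le\aDimFS{(h+1)}(Y)\le\oDimFS{(h+1)}(Y)<\adimFS{(h)}(Y)\le\aDimFS{(h)}(Y).\]
Reading off the first and fourth terms yields $\adimFS{(h+1)}(Y)<\adimFS{(h)}(Y)$. Reading off the second and fifth terms yields $\aDimFS{(h+1)}(Y)<\aDimFS{(h)}(Y)$.

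There is no real obstacle once Theorem~\ref{thm:hierarchy} is available; all of the difficulty lies in that theorem. The only care needed is to apply the two monotonicity facts in the correct direction: strong predimensions are the larger quantities, and oblivious predimensions are the larger quantities.
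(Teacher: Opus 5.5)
Your proof is correct and matches the paper's reasoning. The paper calls the corollary immediate from Theorem~\ref{thm:hierarchy} because adaptive gamblers generalize oblivious ones. Your chain makes this explicit, using that fact together with the fact that strong success implies success.
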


    This hierarchy is analogous to the oblivious predimension hierarchy theorem of Huang, Li, Lutz, and Lutz~\cite{mhfsd}. In fact, we prove Theorem~\ref{thm:hierarchy} using the same family of sequences used in~\cite{mhfsd}, which is surprising in light of the separation between adaptive and oblivious predimensions established by Theorems~\ref{thm:aub} and~\ref{thm:olb}. We now describe that family of sequences, which is constructed by starting from Martin-L\"of random sequences and then imposing structures of backward-looking self-reference onto these sequences. Prime numbers play an important role in keeping these structures tree-like, so that paths of self-reference are unique. For each positive integer $k$, let $p_k$ denote the $k$\textsuperscript{th} prime number, and define $\nu_k:\Z\to\N$, where $\nu_k(n)$ is the multiplicity of $p_k$ in $n$. Let $p_0=1$.
    
    \begin{definition}[\cite{mhfsd}]
        For each positive integer $h$, define the function
        $F_{h+1}:\{0,1\}^\omega\to\{0,1\}^\omega$
        by, for each sequence $S\in\{0,1\}^\omega$,
        $F_{h+1}(S)[0]=0$,
        and, for all $q,r\geq 0$ such that $r<p_{h+1}$,
        \[F_{h+1}(S)[qp_{h+1}+r]=\begin{cases}\bigoplus_{k=1}^h F_{h+1}(S)[qp_k]&\text{if }r=0\\S[q\cdot(p_{h+1}-1)+r]&\text{if }r>0,\end{cases}\]
        where $\oplus$ denotes the parity operation.
    \end{definition}

    \begin{figure}
        \centering
        \input{8def3head1}
        \caption{After the first bit $F_3(S)[0]$, four out of every five bits in $F_3(S)$ are drawn directly from $S$. Each remaining bit in $F_3(S)$ is $F_3(S)[5n]$ for some $n\geq 1$ and is calculated as $F_3(S)[2n]\oplus F_3(S)[3n]$.}
    \end{figure}

    \begin{figure}
        \centering
        \begin{tikzpicture}[x=0.75pt,y=0.75pt,yscale=-1,xscale=1]

\draw [line width=1.5]    (90,89.83) -- (482,89.83) ;
\draw  [fill=gold  ,fill opacity=1 ] (86,89.83) .. controls (86,92.05) and (87.79,93.84) .. (90,93.84) .. controls (92.21,93.84) and (94,92.05) .. (94,89.83) .. controls (94,87.61) and (92.21,85.82) .. (90,85.82) .. controls (87.79,85.82) and (86,87.61) .. (86,89.83) -- cycle ;
\draw  [fill=gold  ,fill opacity=1 ] (133.87,89.41) .. controls (133.87,91.63) and (135.66,93.43) .. (137.87,93.43) .. controls (140.09,93.43) and (141.88,91.63) .. (141.88,89.41) .. controls (141.88,87.2) and (140.09,85.4) .. (137.87,85.4) .. controls (135.66,85.4) and (133.87,87.2) .. (133.87,89.41) -- cycle ;
\draw  [fill=gold  ,fill opacity=1 ] (157.87,89.41) .. controls (157.87,91.63) and (159.66,93.43) .. (161.87,93.43) .. controls (164.09,93.43) and (165.88,91.63) .. (165.88,89.41) .. controls (165.88,87.2) and (164.09,85.4) .. (161.87,85.4) .. controls (159.66,85.4) and (157.87,87.2) .. (157.87,89.41) -- cycle ;
\draw  [fill=gold  ,fill opacity=1 ] (193.87,89.41) .. controls (193.87,91.63) and (195.66,93.43) .. (197.87,93.43) .. controls (200.09,93.43) and (201.88,91.63) .. (201.88,89.41) .. controls (201.88,87.2) and (200.09,85.4) .. (197.87,85.4) .. controls (195.66,85.4) and (193.87,87.2) .. (193.87,89.41) -- cycle ;
\draw  [fill=gold  ,fill opacity=1 ] (204.87,89.41) .. controls (204.87,91.63) and (206.66,93.43) .. (208.87,93.43) .. controls (211.09,93.43) and (212.88,91.63) .. (212.88,89.41) .. controls (212.88,87.2) and (211.09,85.4) .. (208.87,85.4) .. controls (206.66,85.4) and (204.87,87.2) .. (204.87,89.41) -- cycle ;
\draw  [fill=gold  ,fill opacity=1 ] (264.87,89.41) .. controls (264.87,91.63) and (266.66,93.43) .. (268.87,93.43) .. controls (271.09,93.43) and (272.88,91.63) .. (272.88,89.41) .. controls (272.88,87.2) and (271.09,85.4) .. (268.87,85.4) .. controls (266.66,85.4) and (264.87,87.2) .. (264.87,89.41) -- cycle ;
\draw  [fill=gold  ,fill opacity=1 ] (384.87,89.41) .. controls (384.87,91.63) and (386.66,93.43) .. (388.87,93.43) .. controls (391.09,93.43) and (392.88,91.63) .. (392.88,89.41) .. controls (392.88,87.2) and (391.09,85.4) .. (388.87,85.4) .. controls (386.66,85.4) and (384.87,87.2) .. (384.87,89.41) -- cycle ;
\draw [line width=1.5]    (90,168.83) -- (482,168.83) ;
\draw  [fill=gold  ,fill opacity=1 ] (86,168.83) .. controls (86,171.05) and (87.79,172.84) .. (90,172.84) .. controls (92.21,172.84) and (94,171.05) .. (94,168.83) .. controls (94,166.61) and (92.21,164.82) .. (90,164.82) .. controls (87.79,164.82) and (86,166.61) .. (86,168.83) -- cycle ;
\draw  [fill=gold  ,fill opacity=1 ] (124.87,168.41) .. controls (124.87,170.63) and (126.66,172.43) .. (128.87,172.43) .. controls (131.09,172.43) and (132.88,170.63) .. (132.88,168.41) .. controls (132.88,166.2) and (131.09,164.4) .. (128.87,164.4) .. controls (126.66,164.4) and (124.87,166.2) .. (124.87,168.41) -- cycle ;
\draw  [fill=gold  ,fill opacity=1 ] (144.87,168.41) .. controls (144.87,170.63) and (146.66,172.43) .. (148.87,172.43) .. controls (151.09,172.43) and (152.88,170.63) .. (152.88,168.41) .. controls (152.88,166.2) and (151.09,164.4) .. (148.87,164.4) .. controls (146.66,164.4) and (144.87,166.2) .. (144.87,168.41) -- cycle ;
\draw  [fill=gold  ,fill opacity=1 ] (172.87,168.41) .. controls (172.87,170.63) and (174.66,172.43) .. (176.87,172.43) .. controls (179.09,172.43) and (180.88,170.63) .. (180.88,168.41) .. controls (180.88,166.2) and (179.09,164.4) .. (176.87,164.4) .. controls (174.66,164.4) and (172.87,166.2) .. (172.87,168.41) -- cycle ;
\draw    (273.65,86.1) .. controls (312.54,58.15) and (350.63,61.45) .. (385.87,87.4) ;
\draw [shift={(271.87,87.4)}, rotate = 323.39] [color={rgb, 255:red, 0; green, 0; blue, 0 }  ][line width=0.75]    (8.74,-2.63) .. controls (5.56,-1.12) and (2.65,-0.24) .. (0,0) .. controls (2.65,0.24) and (5.56,1.12) .. (8.74,2.63)   ;
\draw    (213.68,86.08) .. controls (253.79,57.44) and (332.81,37.76) .. (385.87,87.4) ;
\draw [shift={(211.87,87.4)}, rotate = 323.39] [color={rgb, 255:red, 0; green, 0; blue, 0 }  ][line width=0.75]    (8.74,-2.63) .. controls (5.56,-1.12) and (2.65,-0.24) .. (0,0) .. controls (2.65,0.24) and (5.56,1.12) .. (8.74,2.63)   ;
\draw    (141.04,95.07) .. controls (160.39,121.29) and (194.24,116.02) .. (205.87,92.43) ;
\draw [shift={(139.87,93.43)}, rotate = 55.74] [color={rgb, 255:red, 0; green, 0; blue, 0 }  ][line width=0.75]    (8.74,-2.63) .. controls (5.56,-1.12) and (2.65,-0.24) .. (0,0) .. controls (2.65,0.24) and (5.56,1.12) .. (8.74,2.63)   ;
\draw    (165.46,94.05) .. controls (179.98,108.35) and (193.45,106.01) .. (205.87,92.43) ;
\draw [shift={(163.87,92.43)}, rotate = 46.74] [color={rgb, 255:red, 0; green, 0; blue, 0 }  ][line width=0.75]    (8.74,-2.63) .. controls (5.56,-1.12) and (2.65,-0.24) .. (0,0) .. controls (2.65,0.24) and (5.56,1.12) .. (8.74,2.63)   ;
\draw    (202.27,84.91) .. controls (232.78,53.1) and (258.13,72.71) .. (266.87,86.41) ;
\draw [shift={(200.87,86.4)}, rotate = 312.56] [color={rgb, 255:red, 0; green, 0; blue, 0 }  ][line width=0.75]    (8.74,-2.63) .. controls (5.56,-1.12) and (2.65,-0.24) .. (0,0) .. controls (2.65,0.24) and (5.56,1.12) .. (8.74,2.63)   ;
\draw    (164.19,83.22) .. controls (186.39,48.08) and (244.46,38.48) .. (266.87,86.41) ;
\draw [shift={(162.87,85.4)}, rotate = 299.79] [color={rgb, 255:red, 0; green, 0; blue, 0 }  ][line width=0.75]    (8.74,-2.63) .. controls (5.56,-1.12) and (2.65,-0.24) .. (0,0) .. controls (2.65,0.24) and (5.56,1.12) .. (8.74,2.63)   ;
\draw    (178.39,162.47) -- (196.87,93.43) ;
\draw [shift={(177.87,164.4)}, rotate = 284.99] [color={rgb, 255:red, 0; green, 0; blue, 0 }  ][line width=0.75]    (8.74,-2.63) .. controls (5.56,-1.12) and (2.65,-0.24) .. (0,0) .. controls (2.65,0.24) and (5.56,1.12) .. (8.74,2.63)   ;
\draw    (149.21,162.43) -- (160.87,93.43) ;
\draw [shift={(148.87,164.4)}, rotate = 279.6] [color={rgb, 255:red, 0; green, 0; blue, 0 }  ][line width=0.75]    (8.74,-2.63) .. controls (5.56,-1.12) and (2.65,-0.24) .. (0,0) .. controls (2.65,0.24) and (5.56,1.12) .. (8.74,2.63)   ;
\draw    (129.1,162.41) -- (136.87,93.43) ;
\draw [shift={(128.87,164.4)}, rotate = 276.43] [color={rgb, 255:red, 0; green, 0; blue, 0 }  ][line width=0.75]    (8.74,-2.63) .. controls (5.56,-1.12) and (2.65,-0.24) .. (0,0) .. controls (2.65,0.24) and (5.56,1.12) .. (8.74,2.63)   ;

\draw (39.64,69.8) node [anchor=north west][inner sep=0.75pt]   [align=left] {$\displaystyle F_{3}( S)$};
\draw (66.81,150.46) node [anchor=north west][inner sep=0.75pt]   [align=left] {$\displaystyle S$};
\draw (84.28,94.86) node [anchor=north west][inner sep=0.75pt]   [align=left] {0};
\draw (114.87,94.41) node [anchor=north west][inner sep=0.75pt]   [align=left] {48};
\draw (142.88,70.41) node [anchor=north west][inner sep=0.75pt]   [align=left] {72};
\draw (171.87,70.4) node [anchor=north west][inner sep=0.75pt]   [align=left] {108};
\draw (210.87,94.41) node [anchor=north west][inner sep=0.75pt]   [align=left] {120};
\draw (270.87,94.41) node [anchor=north west][inner sep=0.75pt]   [align=left] {180};
\draw (390.87,94.41) node [anchor=north west][inner sep=0.75pt]   [align=left] {300};
\draw (119.28,173.86) node [anchor=north west][inner sep=0.75pt]   [align=left] {39};
\draw (139.28,173.86) node [anchor=north west][inner sep=0.75pt]   [align=left] {58};
\draw (167.28,173.86) node [anchor=north west][inner sep=0.75pt]   [align=left] {87};
\draw (84,173.83) node [anchor=north west][inner sep=0.75pt]   [align=left] {0};

\end{tikzpicture}
        \caption{When the multiplicity of $p_{h+1}$ in $n$ is greater than 1, cancellation can appear in the dependency structure of $F_{h+1}(S)$. For example, $F_3(S)[300]$ nominally depends on $F_3(S)[72]$ via both $F_3(S)[120]$ and $F_3(S)[180]$, but these dependencies will cancel each other out when the parity operation is applied.}\label{fig:8def3head2}
    \end{figure}
 
    Now let $h$ be any positive integer, $R$ be any Martin-L\"of random binary sequence, and $Y=F_{h+1}(R)$. Informally, every bit $Y[n]$ in the sequence is either the parity of exactly $h$ bits from earlier in the sequence (when $n$ is a multiple of $p_{h+1}$) or a ``fresh'' random bit from $R$ that no effective gambler can hope to predict. In the former case, $Y[n]$ is always \emph{sensitive} to each of those $h$ earlier bits, so the gambler would need information about all $h$ of them to successfully predict $Y[n]$. Intuitively, $h+1$ heads (including $h$ trailing heads) are sufficient to access that information and make confident predictions about these parity bits, but with only $h$ heads (including $h-1$ trailing heads), the gambler is always missing something important.
    
    An upper bound with $h+1$ heads comes directly from Lemma 5.4 of~\cite{mhfsd}, which constructed an oblivious $(h+1)$-FSG to establish that
    $\oDimFS{(h+1)}(Y)\leq 1-1/p_{h+1}$.
    Hence, to prove Theorem~\ref{thm:hierarchy}, it suffices to show
    \begin{equation}\label{eq:fulladim}
        \adimFS{(h)}(Y)= 1.
    \end{equation}
    Huang, Li, Lutz, and Lutz~\cite{mhfsd} proved the oblivious analog of~\eqref{eq:fulladim} by arguing that for every oblivious $h$-head finite-state gambler, there is some fixed $j\in\{1,\ldots,h\}$ and some threshold $n_0$ such that, for nearly all $n>n_0$, the gambler has no information about the $j$\textsuperscript{th} of these $h$ bits, and that without this $j$\textsuperscript{th} piece of pertinent information, it has very little ability to predict bits in the sequence. The recursive structure of the sequence makes the details require care (e.g., the parameterization of ``nearly all''), but~\cite{mhfsd} formalized this idea with a conditional Kolmogorov complexity lower bound on substrings of $Y$, from which they inductively derived $\odimFS{(h)}(Y)= 1$.

    Our proof of~\eqref{eq:fulladim} is based on reversing the order of the quantifiers on $j$ and $n_0$ in the argument of~\cite{mhfsd}. With adaptive head movements, we cannot guarantee that there is a \emph{single} $j$ for which the $j$\textsuperscript{th} pertinent piece of information is not available to the gambler, but intuitively, it is enough to say that there is nearly always \emph{some} $j$ for which the gambler lacks that information.
    
    Dropping the assumption of oblivious head movements slightly complicates the argument of~\cite{mhfsd}, which is cast in terms of the fixed speeds of oblivious heads. Those fixed speeds guarantee that the gap between the trailing heads and the leading head grows linearly, which enables reasoning about linear-length strings that have been visited by the leading head but not yet by any trailing heads. In the adaptive setting, this gap can be sublinear and even sublogarithmic, which makes the directly analogous argument too delicate for the logarithmic error terms incurred in our Kolmogorov complexity calculations. This pushes us to instead reason about strings that may also have been visited by trailing heads. Nevertheless, portions of the argument from~\cite{mhfsd} hold unchanged in this setting.
    
    For all $m,n\in\N$ such that $m\leq n$, define
    \[U(m,n)=[0,m-1]\cap\bigcup_{i=1}^{h-1}[\pi_i(m-1),\pi_i(m-1)+n-m+1],\]
    which includes all possible trailing head positions less than $m$ while the leading head reads $n-m+1$ bits after $Y[0..m-1]$.
    
    Fix a positive integer $d$ to bound the recursion depth we consider. For each $1\leq i\leq h$ and all $m,n\in\N$ such that $m\leq n$, define
    \[V_i(m,n)=\left\{(p_i/p_{h+1})^{\nu_{h+1}(k)}k \;\middle|\; k\in[m..n]\text{ and }\nu_{h+1}(k)\leq d\right\}.\]
    Informally, each index in $V_i(m,n)$ is a leaf in the tree of parity operations that calculate some bit in $Y[m..n]$; specifically, it is the leaf reached by taking the $i$\textsuperscript{th} child at each level. Note that $V_i(m,n)$ and $[m..n]$ will generally not be disjoint, as most indices $\ell$ have $\nu_{h+1}(\ell)=0$. An example is shown in Figure~\ref{fig:8treeargument}.

    \begin{figure}
        \centering
        \tikzset{every picture/.style={line width=0.75pt}} 

\begin{tikzpicture}[x=0.75pt,y=0.75pt,yscale=-1,xscale=1]

\draw [color=gold  ,draw opacity=1 ][fill=gold  ,fill opacity=1 ]   (256.35,49.8) -- (230.01,133.47) ;
\draw [shift={(229.41,135.38)}, rotate = 287.47] [color=gold  ,draw opacity=1 ][line width=0.75]    (10.93,-3.29) .. controls (6.95,-1.4) and (3.31,-0.3) .. (0,0) .. controls (3.31,0.3) and (6.95,1.4) .. (10.93,3.29)   ;
\draw  [fill=black  ,fill opacity=1 ] (594.29,58.46) .. controls (594.29,56.5) and (595.88,54.91) .. (597.84,54.91) .. controls (599.8,54.91) and (601.39,56.5) .. (601.39,58.46) .. controls (601.39,60.42) and (599.8,62.01) .. (597.84,62.01) .. controls (595.88,62.01) and (594.29,60.42) .. (594.29,58.46) -- cycle ;
\draw  [fill=black  ,fill opacity=1 ] (194.27,238.93) .. controls (194.27,236.97) and (195.85,235.38) .. (197.81,235.38) .. controls (199.77,235.38) and (201.36,236.97) .. (201.36,238.93) .. controls (201.36,240.89) and (199.77,242.48) .. (197.81,242.48) .. controls (195.85,242.48) and (194.27,240.89) .. (194.27,238.93) -- cycle ;
\draw    (410.69,58.46) -- (597.84,58.46) ;
\draw  [fill=black  ,fill opacity=1 ] (224.87,138.93) .. controls (224.87,136.97) and (226.45,135.38) .. (228.41,135.38) .. controls (230.37,135.38) and (231.96,136.97) .. (231.96,138.93) .. controls (231.96,140.89) and (230.37,142.48) .. (228.41,142.48) .. controls (226.45,142.48) and (224.87,140.89) .. (224.87,138.93) -- cycle ;
\draw [color=gold  ,draw opacity=1 ][fill=gold  ,fill opacity=1 ]   (226.87,142.93) -- (199.39,233.47) ;
\draw [shift={(198.81,235.38)}, rotate = 286.88] [color=gold  ,draw opacity=1 ][line width=0.75]    (10.93,-3.29) .. controls (6.95,-1.4) and (3.31,-0.3) .. (0,0) .. controls (3.31,0.3) and (6.95,1.4) .. (10.93,3.29)   ;
\draw  [fill=black  ,fill opacity=1 ] (549.06,58.46) .. controls (549.06,56.5) and (550.64,54.91) .. (552.6,54.91) .. controls (554.56,54.91) and (556.15,56.5) .. (556.15,58.46) .. controls (556.15,60.42) and (554.56,62.01) .. (552.6,62.01) .. controls (550.64,62.01) and (549.06,60.42) .. (549.06,58.46) -- cycle ;
\draw  [fill=black  ,fill opacity=1 ] (566.91,58.46) .. controls (566.91,56.5) and (568.5,54.91) .. (570.46,54.91) .. controls (572.41,54.91) and (574,56.5) .. (574,58.46) .. controls (574,60.42) and (572.41,62.01) .. (570.46,62.01) .. controls (568.5,62.01) and (566.91,60.42) .. (566.91,58.46) -- cycle ;
\draw  [fill=black  ,fill opacity=1 ] (408.03,58.46) .. controls (408.03,56.5) and (409.62,54.91) .. (411.58,54.91) .. controls (413.54,54.91) and (415.13,56.5) .. (415.13,58.46) .. controls (415.13,60.42) and (413.54,62.01) .. (411.58,62.01) .. controls (409.62,62.01) and (408.03,60.42) .. (408.03,58.46) -- cycle ;
\draw  [fill=black  ,fill opacity=1 ] (325.27,233.93) .. controls (325.27,231.97) and (326.85,230.38) .. (328.81,230.38) .. controls (330.77,230.38) and (332.36,231.97) .. (332.36,233.93) .. controls (332.36,235.89) and (330.77,237.48) .. (328.81,237.48) .. controls (326.85,237.48) and (325.27,235.89) .. (325.27,233.93) -- cycle ;
\draw [color=gold  ,draw opacity=1 ][fill=gold  ,fill opacity=1 ]   (256.35,49.8) -- (291.22,132.16) ;
\draw [shift={(292,134)}, rotate = 247.05] [color=gold  ,draw opacity=1 ][line width=0.75]    (10.93,-3.29) .. controls (6.95,-1.4) and (3.31,-0.3) .. (0,0) .. controls (3.31,0.3) and (6.95,1.4) .. (10.93,3.29)   ;
\draw  [fill=black  ,fill opacity=1 ] (253.35,50.8) .. controls (253.35,48.84) and (254.94,47.25) .. (256.9,47.25) .. controls (258.86,47.25) and (260.45,48.84) .. (260.45,50.8) .. controls (260.45,52.76) and (258.86,54.34) .. (256.9,54.34) .. controls (254.94,54.34) and (253.35,52.76) .. (253.35,50.8) -- cycle ;
\draw  [fill=black  ,fill opacity=1 ] (290.45,138.55) .. controls (290.45,136.59) and (292.04,135) .. (294,135) .. controls (295.96,135) and (297.55,136.59) .. (297.55,138.55) .. controls (297.55,140.51) and (295.96,142.1) .. (294,142.1) .. controls (292.04,142.1) and (290.45,140.51) .. (290.45,138.55) -- cycle ;
\draw [color=gold  ,draw opacity=1 ][fill=gold  ,fill opacity=1 ]   (296,142.1) -- (327.31,227.12) ;
\draw [shift={(328,229)}, rotate = 249.79] [color=gold  ,draw opacity=1 ][line width=0.75]    (10.93,-3.29) .. controls (6.95,-1.4) and (3.31,-0.3) .. (0,0) .. controls (3.31,0.3) and (6.95,1.4) .. (10.93,3.29)   ;

\draw (410.64,131.8) node [anchor=north west][inner sep=0.75pt]   [align=left] {$\displaystyle p_{1} =2,p_{2} =3,p_{3} =5$};
\draw (240.8,23.83) node [anchor=north west][inner sep=0.75pt]   [align=left] {300};
\draw (185.96,128.25) node [anchor=north west][inner sep=0.75pt]   [align=left] { $\displaystyle 120$};
\draw (160.14,229.04) node [anchor=north west][inner sep=0.75pt]   [align=left] { $\displaystyle 48$};
\draw (151,259) node [anchor=north west][inner sep=0.75pt]   [align=left] {$\displaystyle 48\in V_{1}( m,n)$};
\draw (180.78,81.66) node [anchor=north west][inner sep=0.75pt]   [align=left] {{\small  $\displaystyle \times ( 2/5)$}};
\draw (151.07,181.01) node [anchor=north west][inner sep=0.75pt]   [align=left] {{\small  $\displaystyle \times ( 2/5)$}};
\draw (534.72,60.54) node [anchor=north west][inner sep=0.75pt]   [align=left] { $\displaystyle m$};
\draw (583.18,59.76) node [anchor=north west][inner sep=0.75pt]   [align=left] { $\displaystyle n$};
\draw (394.64,59.76) node [anchor=north west][inner sep=0.75pt]   [align=left] { $\displaystyle 0$};
\draw (556.8,37.83) node [anchor=north west][inner sep=0.75pt]   [align=left] {300};
\draw (411.64,158.8) node [anchor=north west][inner sep=0.75pt]   [align=left] {$\displaystyle p_{1} /p_{3} =2/5$};
\draw (412,184.8) node [anchor=north west][inner sep=0.75pt]   [align=left] {$\displaystyle p_{2} /p_{3} =3/5$};
\draw (278.78,80.66) node [anchor=north west][inner sep=0.75pt]   [align=left] {{\small  $\displaystyle \times ( 3/5)$}};
\draw (296.96,128.25) node [anchor=north west][inner sep=0.75pt]   [align=left] { $\displaystyle 180$};
\draw (315.78,180.66) node [anchor=north west][inner sep=0.75pt]   [align=left] {{\small  $\displaystyle \times ( 3/5)$}};
\draw (330.96,224.25) node [anchor=north west][inner sep=0.75pt]   [align=left] { $\displaystyle 108$};
\draw (300,259) node [anchor=north west][inner sep=0.75pt]   [align=left] {$\displaystyle 108\in V_{2}( m,n)$};

\end{tikzpicture}
        \caption{
            Consider $h=2$, $m=298$, $n=305$, and any $d\geq 2$. Then
            \begin{align*}
                V_1(m,n)&=\{48,122,298,299,301,302,303,304\},\\
                V_2(m,n)&=\{108,183,298,299,301,302,303,304\}.
            \end{align*}
            As the figure shows, $48\in V_1(m,n)$ because it is a leaf of the tree rooted at $300\in[m..n]$, reached by taking the first child at every level. Similarly, $108\in V_2(m,n)$ because it is a leaf of the tree rooted at 300, reached by taking the second child at every level. When $d$ is less than the depth of the tree, the tree is truncated to depth $d$ before considering the leaves. Thus, the same example with $d=1$ would replace 48 with 120 in $V_1(m,n)$ and 108 with 180 in $V_2(m,n)$.
        }\label{fig:8treeargument}
    \end{figure}
    
    We also define the \emph{closure} of $V_i$ as
    \[\overline{V}_i(m,n)=\left\{\ell\cdot \frac{p_{h+1}^{a_1+\ldots+a_h}}{p_1^{a_1}\cdots p_h^{a_h}}\;\middle|\; \ell\in V_i\text{ and }a_1,\ldots,a_h\in\N\right\}.\]
    An index $k$ belongs to this closure if there is some index $\ell\in V_i(m,n)$ such that $Y[\ell]$ appears \emph{somewhere} in the tree of parity operations used to compute $Y[k]$. This includes the root, so $V_i(m,n)\subseteq \overline{V}_i(m,n)$.

    The following Lemma formalizes the intuition that an $h$-FSG is missing some key piece of information. 

    \begin{lemma}\label{lem:hier-disjoint}
        There is a constant $\gamma\in(1/2,1)\cap \Q$ such that whenever $n$ is sufficiently large and $m\in[\gamma n,n]$, there is some $j(m,n)\in\{1,\ldots,h\}$ satisfying
        \[U(m,n)\cap \overline{V}_{j(m,n)}(m,n)=\emptyset.\]
    \end{lemma}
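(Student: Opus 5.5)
The plan is a pigeonhole argument: $h-1$ trailing-head windows against $h$ families $\overline V_1(m,n),\dots,\overline V_h(m,n)$. The first step is to describe $\overline V_i(m,n)\cap[0,m-1]$ geometrically. Every element of $V_i(m,n)$ has the form $(p_i/p_{h+1})^{e}k$ with $k\in[m..n]$ and $0\le e\le d$. Every element of $\overline V_i(m,n)$ multiplies such an index by $p_{h+1}^{a_1+\dots+a_h}/(p_1^{a_1}\cdots p_h^{a_h})\ge 1$.
\begin{itemize}
\item If the total exponent $a_1+\dots+a_h$ exceeds $e$, the ratio is at least $1$, so the element is at least $m$ and lies outside $[0,m-1]\supseteq U(m,n)$.
\item Otherwise, each element of $\overline V_i(m,n)$ below $m$ lies in a scaled window $[c\,m,c\,n]$ with $c$ in a finite set $C_i\subseteq(0,1)$. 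This set depends only on $h$, $d$ and $i$, not on $m$ or $n$.
\end{itemize}
So below $m$, $\overline V_i(m,n)$ is contained in $\bigcup_{c\in C_i}[cm,cn]$. Each such window has length $c(n-m)\le(1-\gamma)n$, and it starts no later than $c_{\max}m$, where $c_{\max}=\max\bigcup_i C_i<1$.

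Second, each trailing-head window $[\pi_i(m-1),\pi_i(m-1)+n-m+1]$ in the definition of $U(m,n)$ has length $n-m+2\le(1-\gamma)n+2$. I would choose $\gamma$ rational and so close to $1$ that $(1-\gamma)$ is much smaller than the minimum relative gap between distinct elements of the finite set $C=\bigcup_i C_i$. Then, for large $n$, the scaled windows $[cm,cn]$ for distinct $c\in C$ are pairwise separated by gaps of linear length, namely at least $(\min|c-c'|-(1-\gamma))m$. Hence a single window of length $(1-\gamma)n+2$ can meet $[cm,cn]$ for at most one $c\in C$. This is the same ``gaps are large when $m$ is close to $n$'' phenomenon used in Lemma~\ref{lem:olb-comp}, except that the trailing-head positions $\pi_i(m-1)$ are now arbitrary rather than tied to fixed speeds. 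This is harmless, because we only use the window length, not its location.

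Third, the pigeonhole step. Each of the $h-1$ trailing-head windows meets at most one ratio $c$, and therefore meets $\overline V_i(m,n)$ only for those $i$ with $c\in C_i$. If the sets $C_1,\dots,C_h$ are pairwise disjoint, each window meets at most one family. Then at most $h-1$ of the $h$ families are hit, and some $j(m,n)$ has $U(m,n)\cap\overline V_{j(m,n)}(m,n)=\emptyset$.

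The main obstacle is establishing this disjointness, or a sufficient substitute. A ratio in $C_i$ has the form $p_i^{e}p_{h+1}^{A-e}/(p_1^{a_1}\cdots p_h^{a_h})$ with $A=\sum a_k\le e$. Here $p_i$ can cancel against $a_i$, so the raw formulas may coincide for different $i$; for example, the ratio $1$ arises for every $i$, but it lies at or above $m$. My first attempt would be unique factorization: write $c$ in lowest terms and show that the exponent pattern of $p_i$ versus the other primes $p_k$ with $k\le h$ recovers $i$ whenever $c<1$. If this fails because genuine collisions occur, I would replace the per-ratio counting with a direct argument. The trailing-head window then meets a single scaled window $[cm,cn]$, and I would show that the indices of $[cm,cn]$ lying in $\overline V_i(m,n)$ for two different $i$ simultaneously can be avoided by dropping to the family whose leaf-path exponent of $p_i$ is maximal. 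If needed, I would also shrink $1-\gamma$ further, using that $d$ and $h$ are fixed.
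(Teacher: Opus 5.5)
\textbf{Verdict: genuine gap, but fixable within your approach.} Your overall structure matches the paper's proof. You use finite sets of ratios, pairwise separated by some $\zeta>0$, and take $\gamma$ close enough to $1$ that each trailing-head window of length at most $(1-\gamma)n+2$ meets at most one family. Then you apply pigeonhole over $h-1$ windows and $h$ families. However, one step is false and the key step is left open.

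\textbf{The first bullet is false.} A total exponent $A=a_1+\cdots+a_h$ exceeding $e$ does not push the index up to $m$. Climbing back through children other than the $i$th can more than undo the descent. Concretely, take $h=2$ (so $p_1,p_2,p_3=2,3,5$), $d\ge 2$, and $k=675=3^3\cdot 5^2\in[m..n]$. Then $108=(2/5)^2\cdot 675\in V_1(m,n)$, and $500=108\cdot 5^3/3^3\in\overline V_1(m,n)$ with $A=3>e=2$. Yet $500/675=20/27$, so $500<m$ once $\gamma>20/27$. Moreover, $20/27$ is not in your $C_1$, since its exponent of $5$ is positive. So your covering of $\overline V_i(m,n)\cap[0,m-1]$ by the windows $[cm,cn]$, $c\in C_i$, is incomplete, and the pigeonhole conclusion does not follow from it.

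The repair is to let $C_i$ be all ratios $c=p_i^{e}p_{h+1}^{A-e}/(p_1^{a_1}\cdots p_h^{a_h})\in(0,1)$ with $e\le d$ and the $a_k$ unrestricted; this is the paper's $T_i$. It is still finite, since $c\ge (p_i/p_{h+1})^{e}(p_{h+1}/p_h)^{A}$ forces $A$ to be bounded.

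\textbf{The disjointness of the $C_i$ is left open.} This is not an optional refinement: it is exactly what the pigeonhole step needs. It is true, and your unique-factorization idea settles it in one line. In lowest terms, $c$ has exponent $e-a_i$ at $p_i$ and exponent $-a_k\le 0$ at each $p_k$ with $k\le h$, $k\ne i$. If $e-a_i\le 0$, then $c=(p_{h+1}/p_i)^{a_i-e}\prod_{k\ne i}(p_{h+1}/p_k)^{a_k}\ge 1$. Hence every $c\in C_i$ has $p_i$ as the unique prime among $p_1,\ldots,p_h$ with positive exponent. This recovers $i$, so $C_i\cap C_{i'}=\emptyset$ for $i\ne i'$. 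The collisions you worried about, such as the ratio $1$, occur only at values $\ge 1$. Your fallback about dropping to the family with maximal leaf-path exponent is unnecessary and, as stated, is not an argument.

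With these two repairs your proof is the paper's, which simply asserts that the $T_i$ are pairwise disjoint.
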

    \begin{proof}
        For each $i\in\{1,\ldots,h\}$, let
        \[T_i = \left\{ \left( \frac{p_i}{p_{h+1}}\right)^t \frac{p_{h+1}^{a_1+\ldots+a_h}}{p_1^{a_1}\cdots p_h^{a_h}}\in(0,1)\;\middle|\; t\leq d \text{ and } a_1,\ldots,a_h\in\N \right\},\]
        noting that $T_i$ and $T_{i'}$ are disjoint whenever $i\neq i'$. Let
        \[T= \bigcup_{i=1}^{h} T_i.\]
        Since $T$ is finite, there is some $\zeta\in(0,1)$ such that, for all distinct $\tau, \tau' \in T$, we have $|\tau-\tau'|>\zeta$. Choose any rational
        \[\gamma\in\left(\frac{2}{2+\zeta},1\right).\]

        Now let $m,n\in\N$ with $m\in[\gamma n,n]$. Observe that for each $i$,
        \[\overline{V}_i(m,n)=\{\tau k\mid\tau\in T_i\text{ and }k\in[m..n]\}.\]
        Therefore, for any $\ell\in \overline{V}_i(m,n)$ and $\ell'\in\overline{V}_{i'}(m,n)$ with $i\neq i'$, we have $\ell=\tau k$ and $\ell'=\tau'k'$, where $\tau,\tau'\in T$ are distinct and $k,k'\in[m..n]$. Then
        \begin{align*}
            |\ell-\ell'|&=|\tau k-\tau'k'|\\
            &\geq |k\cdot (\tau-\tau')|-|\tau'\cdot(k-k')|\\
            &\geq \gamma n\zeta-(1-\gamma)n.
        \end{align*}
        
        The set $U(m,n)$ is a union of intervals, each of length at most $n-m+1\leq (1-\gamma)n+1$. By our choice of $\gamma$, we have
        \[\gamma n\zeta-(1-\gamma)n> (1-\gamma)n+1\]
        for all sufficiently large $n$, so each interval can have nonempty intersection with at most one of the $\overline{V}_i(m,n)$. Therefore, as there are at most $h-1$ such intervals, there is some $j(m,n)\in\{1,\ldots,h\}$ such that $U(m,n)\cap \overline{V}_{j(m,n)}(m,n)=\emptyset$.
    \end{proof}

    Due to this missing information, the string $Y[[m,n]]$ is nearly incompressible given information from the trailing heads, in the sense of the following lemma.

    \begin{lemma}\label{lem:hier-complex}
        If $n$ is sufficiently large and $m\in[\gamma n,\sqrt{\gamma}n]$, then
        \[K(Y[[m,n]]\mid Y[U(m,n)])\ge (n-m)(1-p_{h+1}^{-d-2}).\]
    \end{lemma}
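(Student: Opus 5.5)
Let $j=j(m,n)$ be given by Lemma~\ref{lem:hier-disjoint}, and write $u=Y[U(m,n)]$, $u_h=Y[[m,n]]$, $V=V_j(m,n)$ and $\overline V=\overline V_j(m,n)$. The plan mirrors the proof of Lemma~\ref{lem:olb-comp}. We find a set $W\supseteq U(m,n)$ such that $Y[W]$ carries no information about the leaves $Y[V\cap[0,m-1]]$, and such that $Y[0..n]$ is computable from $(u_h,Y[W])$. Then
\[K(u_h\mid u)\geq K(u_h\mid Y[W])-O(1)\geq K(Y[0..n])-K(Y[W])-O(\log n)\]
by symmetry of information. Natural choice: $W=[0,m-1]\setminus\overline V$. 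It contains $U(m,n)$ by Lemma~\ref{lem:hier-disjoint}, and $U(m,n)\subseteq[0,m-1]$. The index $j$ is $O(1)$ bits, so we may hard-wire it.

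\textbf{Step 1 (reconstruction).} Show that $Y[0..m-1]$ is recoverable from $Y[W]$ and $u_h$. The only missing bits are those at positions $\ell\in\overline V\cap[0,m-1]$. Each such $\ell$ lies on the $j$-th-child path of the depth-$\le d$ parity tree rooted at some $k\in[m..n]$. Process the missing positions top-down from $k$. At each node, $Y[\text{node}]$ is the parity of its $h$ children. All children other than the $j$-th one are not in $\overline V_j$, because the $T_i$ are disjoint. So they are in $W$, or lie in $[m,n]$ and are given by $u_h$. Hence we can solve for the $j$-th child.

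Uniqueness of the path needs the prime structure; this is exactly why $\overline V$ is defined as a closure. There is a subtlety. A node reached at depth $d$ might itself be a multiple of $p_{h+1}$, or might be hit by several trees. The closure together with the bound $m\ge\gamma n$ (so that trees rooted in $[m,n]$ do not reach back into $[m,n]$) should give injectivity.

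\textbf{Step 2 (counting).} Use Martin-L\"of randomness of $R$ as in Lemma~\ref{lem:highcompweak}:
\[K(Y[0..n])\geq K(R[0..n'])-O(\log n),\]
where $n'\approx n(p_{h+1}-1)/p_{h+1}$ is the number of fresh bits. Also
\[K(Y[W])\leq(\#\text{fresh bits of }R\text{ at positions in }W)+O(\log n).\]
The difference is at least the number of fresh positions in $[m,n]$ plus the number of fresh positions in $\overline V\cap[0,m-1]$. The latter contains $V\cap[0,m-1]$. A leaf $(p_j/p_{h+1})^{\nu(k)}k$ with $\nu(k)\le d$ that is not divisible by $p_{h+1}$ is fresh. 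Each root $k$ with $1\le\nu_{h+1}(k)\le d$ therefore contributes one fresh leaf, injectively. Thus the deficit of fresh bits in $[m,n]$ (the multiples of $p_{h+1}$) is compensated by leaves, except for roots with $\nu_{h+1}(k)>d$, or whose depth-$d$ leaf is itself non-fresh. The number of exceptions is about $(n-m)p_{h+1}^{-d-1}$. The condition $m\le\sqrt\gamma n$ ensures that $n-m\ge(1-\sqrt\gamma)n$ is linear. Then the $O(\log n)$ terms and rounding effects are absorbed into the slack between $p_{h+1}^{-d-1}$ and the target $p_{h+1}^{-d-2}$. This is helped by the fact that the leaves lie in $[(p_j/p_{h+1})^d m,\,m)$, which is disjoint from $[m,n]$, so there is no double counting.

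\textbf{Main obstacle.} The main obstacle is Step 1's injectivity and the exact fresh-leaf count. I must check two things. First, distinct roots $k$ give distinct leaves. This holds because $k$ is recovered from the leaf and the depth $t$ by multiplying by $(p_{h+1}/p_j)^t$, and $t$ is determined by $\nu_{h+1}$ and $\nu_j$ considerations. Second, the parity-tree cancellation phenomenon of Figure~\ref{fig:8def3head2} does not break the top-down solving. I would first try to handle this by solving along each root's own $j$-path independently, which is an ordering argument over $t=1,\dots,d$. If the exact constant is not reached this way, I would weaken the bookkeeping to count only roots with $\nu_{h+1}(k)\le d$ and bound the rest crudely by $(n-m)p_{h+1}^{-d-1}+O(1)$.
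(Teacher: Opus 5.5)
Your route differs from the paper's and can be made to work, but Step~1 as written rests on a false claim. As a result, you have not shown that $Y[0..n]$ is computable from $(u_h,w)$. The set $\overline V_j$ is not the union of the $j$-th-child paths. It is closed under passing to a parent through \emph{any} child index, so it contains every ancestor of every $j$-leaf. For example, take $h=2$, $j=1$, and a root $300\in[m..n]$ whose first-child path is $300\to120\to48$. Since $3\mid 48$, the positions $80=48\cdot\frac{5}{3}$ (children $32$ and $48$) and $200=80\cdot\frac{5}{2}$ both lie in $\overline V_1\cap[0,m-1]$. Yet neither lies on the first-child path of any root near $300$, and the same happens at every scale. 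These bits are not in $w$, and your top-down pass never produces them.

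The repair is a second, bottom-up pass. Every element of $\overline V_j$ outside $V_j$ has the form $\ell\,p_{h+1}^{a_1+\cdots+a_h}/(p_1^{a_1}\cdots p_h^{a_h})$ with $\ell\in V_j$ and $a_1+\cdots+a_h\ge 1$. Since leaves have $\nu_{h+1}=0$, every such element is a multiple of $p_{h+1}$. So the top-down pass recovers, in particular, all of $V_j\cap[0,m-1]$, which are the only non-parity positions in $\overline V_j\cap[0,m-1]$. The remaining positions can then be filled in increasing order: each is a parity node whose children are smaller and lie either in $\overline V_j$ (already recovered) or in $W$.

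Two further facts must be stated rather than assumed. First, your bound $K(Y[W])\le\#(\text{fresh positions in }W)+O(\log n)$ needs $W$ to be closed under taking children. This holds exactly because $\overline V_j$ is closed under taking parents. Second, in the top-down pass, consider a sibling $(p_j/p_{h+1})^s(p_i/p_{h+1})k$ with $i\neq j$. It avoids $\overline V_j$, but not because the $T_i$ are disjoint: for $s\ge 1$ its ratio to $k$ lies in no $T_i$. The real reason is that this ratio has positive $p_i$-exponent, while every element of $T_j$ has nonpositive $p_i$-exponent. A coincidence with some $\tau k'$, $\tau\in T_j$, would force $k'/k$ into a fixed finite set of rationals other than $1$, and $\gamma$ close to $1$ rules this out.

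With these repairs, your count is (fresh positions in $[m,n]$) $+\,|V_j\cap[0,m-1]|$, which is exactly $|V_j|$, the same quantity the paper obtains. The paper, citing the proof of Lemma~5.6 of~\cite{mhfsd}, avoids reconstructing the whole prefix. It only needs $v_j=Y[V_j]$ to be computable from $(u_h,w)$, which uses just the top-down pass. Then $K(u_h\mid w)\ge K(v_j\mid w)-O(1)\ge |V_j|-O(\log n)$, because the bits of $v_j$ are fresh bits of $R$ disjoint from those that determine $w$. Your version, modeled on Lemma~\ref{lem:olb-comp}, pays for an extra reconstruction pass but turns the final step into a mechanical count via symmetry of information.

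Finally, the obstacles you flagged are not the real ones. Distinct roots give distinct leaves because two roots in $[m..n]$ cannot differ by a factor $(p_{h+1}/p_j)^t$ with $t\neq 0$ once $\gamma>p_h/p_{h+1}$. The cancellation phenomenon is irrelevant because each solving step uses only the defining parity relation at a single node.
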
 
    \begin{proof}
        Let $j=j(m,n)$ be as in Lemma~\ref{lem:hier-disjoint}, and write $V_j$ and $\overline{V}_j$ for $V_j(m,n)$ and $\overline{V}_j(m,n)$. Define the set $W=\{0,\ldots,m-1\}\setminus\overline{V}_j$ and the strings $u=Y[U(m,n)]$, $u_h=Y[[m,n]]$, $v_j=Y[V_j]$, and $w=Y[W]$.

        Lemma~\ref{lem:hier-disjoint} implies $U\subseteq W$, so $w\mapsto u$ is computable. It follows that
        \begin{align*}
            K(u_h\mid u)&\geq K(u_h\mid w)-O(1)\\
            &\geq |V_j|-O(\log n)\\
            &\geq (n-m)\left(1-p_{h+1}^{-d-1}\right)-O(\log n),
        \end{align*}
        where the last two inequalities are shown in~\cite{mhfsd}, in the proof of Lemma 5.6; briefly, they hold because $v_j$ can be calculated from $u_h$ given $w$, and $v_j$ contains nearly $(n-m)$ bits that are independent of $w$. Unlike the definition of $u$, the definitions of $u_h$, $w$, and $V_j$ do not depend on the behavior of the trailing heads, so these inequalities are properties of the sequence itself and are unaffected by moving to the adaptive setting.
        
        By decreasing the exponent on $p_{h+1}$ to $-d-2$, the $O(\log n)$ term is absorbed for all sufficiently large $n$ since $n-m\geq n-\sqrt{\gamma}n$.
    \end{proof}
    
    Using this Kolmogorov complexity bound, we now bound the value of the gambler's $s$-gale for $s$ arbitrarily close to 1.
    \begin{lemma}\label{lem:hier-dbound}
        For all $\varepsilon \in \Q \cap (0,1)$, the set
        $\big\{d_G^{(1-\varepsilon/\gamma)}(Y[0..n-1])\;\big|\; n \in \N \big\}$
        is bounded by a constant.
    \end{lemma}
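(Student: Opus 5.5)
We will prove the bound by cutting the prefix into geometrically growing blocks and applying Lemma~\ref{lem:dtok} to each block, using Lemma~\ref{lem:hier-complex} as the complexity input. Here $G$ is any adaptive $h$-FSG over $\{0,1\}$, and $d$ is chosen large enough that $p_{h+1}^{-d-2}<\varepsilon/2$.

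\textbf{Step 1: the block decomposition.} Fix $n_0$ large enough for Lemmas~\ref{lem:hier-disjoint} and~\ref{lem:hier-complex}. Define the block endpoints by $n_{i+1}=\lceil n_i/\gamma'\rceil$ for a rational $\gamma'\in[\gamma,\sqrt{\gamma}]$, say $\gamma'=\sqrt\gamma$, adjusted so that each block $[n_i,n_{i+1}-1]$ has $m=n_i\in[\gamma(n_{i+1}-1),\sqrt{\gamma}(n_{i+1}-1)]$.

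\textbf{Step 2: the per-block bound.} For each block, Lemma~\ref{lem:hier-complex} gives
\[K(Y[[n_i,n_{i+1}-1]]\mid Y[U(n_i,n_{i+1}-1)])\geq (n_{i+1}-1-n_i)(1-p_{h+1}^{-d-2}).\]
The set $U(m,n)$ contains all trailing-head positions below $m$ while the leading head reads $Y[m..n]$. This holds because each trailing head moves at most one cell per step, so its position stays within $[\pi_i(m-1),\pi_i(m-1)+n-m+1]$. Hence $U(m,n)$ satisfies the hypothesis of Lemma~\ref{lem:dtok}. Apply that lemma with $\Sigma=\{0,1\}$, $\delta=\varepsilon/4$ and $\alpha=\varepsilon/2+\varepsilon/4$, or any rational $\alpha$ with $1-\alpha+\delta\le 1-p_{h+1}^{-d-2}$ and $\alpha<\varepsilon$. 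The result is
\[\max_{w\sqsubseteq x_i} d_G(y_iw)/d_G(y_i)\le 2^{\alpha(n_{i+1}-n_i)}.\]

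\textbf{Step 3: telescoping.} Multiplying over blocks gives $\log d_G(Y[0..n_\ell-1])\le \alpha n_\ell+O(1)$. For an arbitrary $n$ in block $\ell$, the maximum in Lemma~\ref{lem:dtok} covers the partial block, so $\log d_G(Y[0..n-1])\le \alpha n+O(1)$ for all $n$.

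\textbf{Step 4: passing to the gale.} Then
\[\log d_G^{(1-\varepsilon/\gamma)}(Y[0..n-1])\le(\alpha-\varepsilon/\gamma)n+O(1).\]
This is bounded because $\alpha<\varepsilon<\varepsilon/\gamma$. The slack factor $1/\gamma$ in the statement appears intended to absorb the boundary effect of the last partial block. If the maximum form of Lemma~\ref{lem:dtok} is used, as here, that slack is not even needed.

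\textbf{Main obstacle.} The delicate point is the bookkeeping of Step~1. Each block must satisfy the window condition $m\in[\gamma n,\sqrt\gamma n]$ exactly, with $n$ replaced by the block's right endpoint. The blocks must also be long enough for the ``$n-m$ sufficiently large'' clause of Lemma~\ref{lem:dtok}. Both requirements hold for ratio $\sqrt\gamma$ with rounding, once $n_0$ is large. The finitely many initial positions before $n_0$ contribute only a constant factor, so they do not affect boundedness.
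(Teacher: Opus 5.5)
Your route is the paper's: geometric blocks, Lemma~\ref{lem:hier-complex} fed into Lemma~\ref{lem:dtok} on each block, telescoping, and the maximum over prefixes for the last partial block. However, Step~3 makes an inference that does not follow. The maximum form of Lemma~\ref{lem:dtok} bounds the growth on the partial block $Y[n_\ell..n-1]$ by $2^{\alpha(n_{\ell+1}-n_\ell)}$. That is the allowance for the \emph{whole} block, not $2^{\alpha(n-n_\ell)}$. What you actually get is
\[
\log d_G(Y[0..n-1])\le \alpha\,(n_{\ell+1}-n_0)+\log d_G(Y[0..n_0-1]),
\]
not $\alpha n+O(1)$. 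You cannot improve this by applying Lemma~\ref{lem:dtok} directly to $[n_\ell,n-1]$. When $n$ is close to $n_\ell$, the window hypothesis $m\le\sqrt\gamma\,n$ of Lemma~\ref{lem:hier-complex} fails. The trivial factor-$2$-per-step bound does not help either: combined with the above it gives $\alpha n_\ell+\min\{n-n_\ell,\alpha(n_{\ell+1}-n_\ell)\}$, which can exceed $\alpha n$.

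The repair is one line, and it is exactly what the $1/\gamma$ in the statement is for. Let $\gamma'\ge\gamma$ be your block ratio. Since $n>n_\ell$, we have $n_{\ell+1}\le n/\gamma'+O(1)$, so
\[
\log d_G^{(1-\varepsilon/\gamma)}(Y[0..n-1])\le\Big(\frac{\alpha}{\gamma'}-\frac{\varepsilon}{\gamma}\Big)n+O(1).
\]
This is bounded because $\alpha<\varepsilon$ and $\gamma'\ge\gamma$. The paper does the same thing with $\alpha=\varepsilon$ and $n_k=\lfloor n_0/\gamma^k\rfloor$. It uses $n_{\ell+1}\le n/\gamma$ to cancel $2^{\varepsilon n_{\ell+1}}$ against $2^{-\varepsilon n/\gamma}$.

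So your closing remark has it backwards. Because the maximum form charges the partial block at the full-block rate, some slack is required. It can be the $1/\gamma$ factor, or, after enlarging $d$, a choice $\alpha\le\varepsilon\gamma'$; the maximum form itself supplies none.

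A minor point on Step~1: $\sqrt\gamma$ need not be rational. Also, with ratio exactly $\sqrt\gamma$ and ceilings, $n_i>\sqrt\gamma\,(n_{i+1}-1)$, which violates the upper edge of the window. Taking a rational ratio strictly between $\gamma$ and $\sqrt\gamma$ removes this edge issue for large $n_0$.
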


    \begin{proof}
        Let $d$ be sufficiently large so that $p_{h+1}^{-d-2}<\varepsilon/2$, and let $n_0$ be sufficiently large so that Lemma~\ref{lem:hier-complex} applies for all $n\geq n_0$ and Lemma~\ref{lem:dtok} applies for all $n\geq n_0$ and $m\leq\gamma n$. For each $k\geq 1$, let $n_k=\lfloor n_0/\gamma^k\rfloor$. For all $k\in\N$, let $y_k=Y[0..n_k-1]$ and $x_k=Y[n_k..n_{k+1}-1]$.

        For each $k\in\N$, Lemma~\ref{lem:hier-complex} and our choice of $d$ imply
        \[K(Y[[n_k..n_{k+1}-1]]\mid Y[U(n_k,n_{k+1}-1)])\geq (n_{k+1}-1-n_k)(1-\varepsilon/2),\]
        so Lemma~\ref{lem:dtok}, with $\alpha=\varepsilon$ and $\delta=\varepsilon/2$, gives
        \begin{equation}\label{eq:lem:hier-dbound}
            \max_{w\sqsubseteq x_k}\frac{d_G(y_kw)}{d_G(y_k)}\leq 2^{\varepsilon\cdot (n_{k+1}-1-n_k)}\leq 2^{\varepsilon\cdot (n_{k+1}-n_k)}.
        \end{equation}
        Therefore, for each $\ell\in\N$ and $n\in (n_\ell,n_{\ell+1}]$, letting $w=Y[n_\ell..n-1]$, we have
        \begin{align*}
            d^{(1-\varepsilon/\gamma)}_G(Y[0..n-1])&=2^{-\varepsilon n/\gamma}d_G(Y[0..n-1])\\
            &=2^{-\varepsilon n/\gamma}d_G(y_0)\left(\prod_{k=0}^{\ell-1}\frac{d_G(y_kx_k)}{d_G(y_k)}\right)\frac{d_G(y_\ell w)}{d_G(y_\ell)}\\
            &\leq 2^{-\varepsilon n/\gamma}d_G(y_0)2^{\varepsilon n_{\ell+1}}\\
            &\leq 2^{-\varepsilon n/\gamma}d_G(y_0)2^{\varepsilon n/\gamma}\\
            &=d_G(y_0),
        \end{align*}
        which is a constant.
    \end{proof}

    Since Lemma~\ref{lem:hier-dbound} holds for every adaptive $h$-head finite-state gambler and for arbitrarily small $\varepsilon$, we conclude that $\adimFS{(h)}(Y)=1$, completing the proof of Theorem~\ref{thm:hierarchy}.

\section{Conclusion}\label{sec:conc}
    By establishing a separation between the predictive power of oblivious and adaptive multi-head finite-state gamblers, we have shown that adaptive multi-head finite-state dimensions and predimensions are independently interesting objects of study. The statement and proof of our hierarchy theorem demonstrate that this separation is subtle and that some of the pre-existing techniques for analyzing oblivious multi-head finite-state predimensions can be modified to apply in the adaptive setting. This opens several natural directions for further work, some of which we now highlight. First, what is the maximum separation? The separation given by Theorems~\ref{thm:aub} and~\ref{thm:olb} is quantitatively quite small.  Second, are adaptive multi-head finite-state predimensions stable under unions? The technique used by Huang, Li, Lutz, and Lutz~\cite{mhfsd} to prove instability in the oblivious setting relies heavily on obliviousness. Third, does the compression characterization of Lutz~\cite{mhfsc} hold in the adaptive setting? We conjecture that it does. Finally, and more broadly, what would be the effects of allowing two-way head movement or non-determinism? These modifications have been studied in the context of multi-head automata for language recognition~\cite{HKM09}, but not for multi-head gamblers.

\bibliographystyle{splncs04}
\bibliography{amhfsg}

\begin{thebibliography}{10}
\providecommand{\url}[1]{\texttt{#1}}
\providecommand{\urlprefix}{URL }
\providecommand{\doi}[1]{https://doi.org/#1}

\bibitem{AHLM07}
Athreya, K.B., Hitchcock, J.M., Lutz, J.H., Mayordomo, E.: Effective strong dimension in algorithmic information and computational complexity. SIAM Journal on Computing  \textbf{37}(3),  671--705 (2007). \doi{10.1137/S0097539703446912}

\bibitem{BCH2018}
Becher, V., Carton, O., Heiber, P.A.: Finite-state independence. Theory of Computing Systems  \textbf{62},  1555--1572 (2018). \doi{10.1007/s00224-017-9821-6}

\bibitem{DLLM04}
Dai, J.J., Lathrop, J.I., Lutz, J.H., Mayordomo, E.: Finite-state dimension. Theoretical Computer Science  \textbf{310}(1),  1--33 (2004). \doi{10.1016/S0304-3975(03)00244-5}

\bibitem{DowHir10}
Downey, R.G., Hirschfeldt, D.R.: Algorithmic Randomness and Complexity. Springer-Verlag, New York (2010). \doi{10.1007/978-0-387-68441-3}

\bibitem{DKP20}
{\v{D}}uri{\v{s}}, P., Kr{\'a}lovi{\v{c}}, R., Pardubsk{\'a}, D.: Tight hierarchy of data-independent multi-head automata. Journal of Computer and System Sciences  \textbf{114},  126--136 (2020). \doi{10.1016/j.jcss.2020.06.005}

\bibitem{HKM09}
Holzer, M., Kutrib, M., Malcher, A.: Multi-head finite automata: Characterizations, concepts and open problems. In: Electronic Proceedings in Theoretical Computer Science (EPTCS). vol.~1, pp. 93--107 (2009). \doi{10.4204/EPTCS.1.9}

\bibitem{mhfsd}
Huang, X., Li, X., Lutz, J.H., Lutz, N.: Multi-head finite-state dimension (2026), \url{https://arxiv.org/abs/2509.22912}

\bibitem{KozShe2021}
Kozachinskiy, A., Shen, A.: Automatic {K}olmogorov complexity, normality, and finite-state dimension revisited. Journal of Computer and System Sciences  \textbf{118},  75--107 (2021). \doi{10.1016/j.jcss.2020.12.003}

\bibitem{LiVit19}
Li, M., Vit\'{a}nyi, P.M.B.: An Introduction to {K}olmogorov Complexity and Its Applications. Springer-Verlag, Berlin, fourth edn. (2019). \doi{10.1007/978-3-030-11298-1}

\bibitem{DCC}
Lutz, J.H.: Dimension in complexity classes. SIAM Journal on Computing  \textbf{32}(5),  1236--1259 (2003). \doi{10.1137/S0097539701417723}

\bibitem{DISS}
Lutz, J.H.: The dimensions of individual strings and sequences. Information and Computation  \textbf{187}(1),  49--79 (2003). \doi{10.1016/S0890-5401(03)00187-1}

\bibitem{mhfsc}
Lutz, N.: Multi-head finite-state compression. In: LATIN 2026: Theoretical Informatics (to appear)

\bibitem{Mart66}
Martin-L\"of, P.: The definition of random sequences. Information and Control  \textbf{9}(6),  602--619 (1966). \doi{S0019-9958(66)80018-9}

\bibitem{YR78}
Yao, A.C., Rivest, R.L.: $k+1$ heads are better than $k$. Journal of the ACM  \textbf{25}(2),  337--340 (1978). \doi{10.1145/322063.322076}

\end{thebibliography}

\end{document}